\newtheorem{definition}{Definition}
\newtheorem{theorem}{Theorem}
\newtheorem{proposition}[theorem]{Proposition}
\newtheorem{lemma}[theorem]{Lemma}
\newenvironment{example}{\myexample}{\qed\endmyexample}
\newenvironment{remark}{\textit{Remark: }}{}
\def\qed{\endIEEEproof}
\renewcommand{\dim}{\operatorname{\sf dim}\hspace{0.1em}}
\DeclareMathOperator{\rank}{\sf rank\hspace{0.1em}}
\DeclareMathOperator{\wt}{\sf wt\hspace{0.1em}}
\newcommand{\linspan}[1]{\left< #1 \right>}
\newcommand{\rowsp}{\linspan}
\newcommand{\Fq}{\mathbb{F}_q}
\newcommand{\Fqm}{\mathbb{F}_{q^m}}
\newcommand{\Fqn}{\mathbb{F}_{q^n}}
\newcommand{\mat}[1]{\begin{bmatrix} #1 \end{bmatrix}}
\newcommand{\dr}{d_{\rm R}}
\newcommand{\calA}{\mathcal{A}}
\newcommand{\calC}{\mathcal{C}}
\newcommand{\calE}{\mathcal{E}}
\newcommand{\calI}{\mathcal{I}}
\newcommand{\calP}{\mathcal{P}}
\newcommand{\calR}{\mathcal{R}}
\newcommand{\calS}{\mathcal{S}}
\newcommand{\calW}{\mathcal{W}}
\newcommand{\calX}{\mathcal{X}}
\newcommand{\calY}{\mathcal{Y}}
\newcommand{\calZ}{\mathcal{Z}}
\newcommand{\frakR}{\mathfrak{R}}
\newcommand{\dec}{\textsf{D}}
\newcommand{\enc}{\textsf{E}}
\newcommand{\Gin}{G_{{\rm sec}}}
\title{Universal Secure Network Coding\\ via Rank-Metric Codes}
\author{Danilo Silva and Frank R. Kschischang%
\thanks{The work of D. Silva was supported in part by CAPES Foundation (Brazil) and by FAPESP (Brazil) grant 2009/15771-7. The material in this paper was presented in part at the IEEE International Symposium on Information Theory, Toronto, Canada, July 2008, and in part at the IEEE International Symposium on Information Theory, Austin, TX, June 2010.}
\thanks{D. Silva was with The Edward S. Rogers Sr. Department of Electrical and Computer Engineering, University of Toronto, Toronto, ON M5S 3G4, Canada. He is now with the School of Electric and Computer Engineering, State University of Campinas, Campinas, SP 13083-970, Brazil (e-mail: danilo@decom.fee.unicamp.br).}
\thanks{F. R. Kschischang is with The Edward S. Rogers Sr. Department of Electrical and Computer Engineering, University of Toronto, Toronto, ON M5S 3G4, Canada (e-mail: frank@comm.utoronto.ca).}
}
\begin{document}
\maketitle
\thispagestyle{empty}

\begin{abstract}

The problem of securing a network coding communication system against an eavesdropper adversary is considered. The network implements linear network coding to deliver $n$ packets from source to each receiver, and the adversary can eavesdrop on $\mu$ arbitrarily chosen links. The objective is to provide reliable communication to all receivers, while guaranteeing that the source information remains information-theoretically secure from the adversary. A coding scheme is proposed that can achieve the maximum possible rate of $n-\mu$ packets. The scheme, which is based on rank-metric codes, has the distinctive property of being \emph{universal}: it can be applied on top of any communication network without requiring knowledge of or any modifications on the underlying network code. The only requirement of the scheme is that the packet length be at least $n$, which is shown to be strictly necessary for universal communication at the maximum rate.
A further scenario is considered where the adversary is allowed not only to eavesdrop but also to inject up to $t$ erroneous packets into the network, and the network may suffer from a rank deficiency of at most $\rho$. In this case, the proposed scheme can be extended to achieve the rate of $n-\rho-2t-\mu$ packets. This rate is shown to be optimal under the assumption of zero-error communication.

\end{abstract}

\section{Introduction}
\label{sec:introduction}

The paradigm of network coding \cite{Ahlswede++2000,Li++2003,Koetter.Medard2003} has provided a rich source of new problems that generalize traditional problems in communications. One such problem, introduced in \cite{Cai.Yeung2002:Secure} by Cai and Yeung, is that of securing a multicast network against an eavesdropper adversary.

Formally, consider a multicast network with unit capacity edges implementing linear network coding over a finite field $\Fq$. It is assumed that each link in the network carries a packet consisting of $m$ symbols in $\Fq$ and that the network is capable of reliably transporting $n$ packets from the source to each destination. Now, suppose there is an eavesdropper that can listen to transmissions on $\mu$ arbitrarily chosen links\footnote{We consider a model where network links rather than nodes are eavesdropped; eavesdropping on a node is equivalent to eavesdropping on all links incoming to it.} of the network. The secure network coding problem is to design an outer code (and possibly also the underlying network code) such that a message can be communicated to each receiver without leaking any information to the eavesdropper (i.e., security in the information-theoretic sense).

The work of Cai and Yeung \cite{Cai.Yeung2002:Secure} shows that the maximum achievable rate (i.e., the \emph{secrecy capacity}) for this problem is given by $n-\mu$ packets, achievable if the field size $q$ is sufficiently large. They presented a construction of an outer code that achieves this capacity provided that $q \geq \binom{|\calE|}{\mu}$. Their construction takes $O(q)$ steps and requires that the outer code meet certain security conditions imposed by the underlying network code. Later, Feldman et al. \cite{Feldman++2004:CapacitySecure} showed that, by slightly reducing the rate, it is possible to efficiently construct an outer code that is secure with high probability using a much smaller field size. On the other hand, they also showed that, under the assumption of a scalar linear outer code, there are instances of the problem where a very large field size is strictly necessary to achieve capacity.

More recently, Rouayheb and Soljanin \cite{Rouayheb.Soljanin2007} showed that the secure network coding problem can be regarded as a network generalization of the Ozarow-Wyner wiretap channel of type II \cite{Ozarow.Wyner1984,Ozarow.Wyner1985}. Their observation provides an important connection with a classical problem in information theory and yields a much more transparent framework for dealing with network coding security. In particular, they show that the same technique used to achieve capacity of the wiretap channel II---a coset coding scheme based on a linear MDS code---can also provide security for a wiretap network. Unfortunately, in their approach, the network code has to be modified to satisfy certain constraints imposed by the outer code.

Note that, in all the previous works, either the network code has to be modified to provide security \cite{Rouayheb.Soljanin2007}, or the outer code has to be designed based on the specific network code used \cite{Cai.Yeung2002:Secure,Feldman++2004:CapacitySecure}. In all cases, the network code must be known beforehand, and the field size required is significantly larger than the minimum required for conventional multicasting.

The present paper is motivated by Rouayheb and Soljanin's formulation of a wiretap network and builds on their results. Our first main contribution is a coset coding scheme that neither imposes any constraints on, nor requires any knowledge of, the underlying network code. In other words, for any linear network code that is feasible for multicast, secure communication at the maximum possible rate can be achieved with a fixed outer code. In particular, the field size can be chosen as the minimum required for multicasting. In this paper, such network-code-independent schemes are called \emph{universal}. An important consequence of our result is that, if universal schemes are assumed, then the problem of information transport (i.e., designing a feasible network code) and the problem of security against an eavesdropper can be completely separated from each other. In particular, universal schemes can be seamlessly integrated with random network coding.

The essence of our approach is to use a vector linear outer code. More precisely, we regard packets as elements of an extension field $\Fqm$, and use an outer code that is linear over $\Fqm$. Taking advantage of this extension field, we can then replace the linear MDS code in Ozarow-Wyner coset coding scheme by a linear maximum-rank-distance (MRD) code, which is essentially a linear code over $\Fqm$ that is optimal for the rank metric. Codes in the rank metric were studied by a number of authors \cite{Gabidulin1985,Roth1991:MaximumRankArrayCodes,Gadouleau.Yan2008} and have been proposed for error control in random network coding \cite{Silva.Kschischang2007:ISIT,Silva++2008}.
Here, we show that, since the channel to the eavesdropper is a linear transformation channel (rather than an erasure channel), rank-metric codes are naturally suitable to the problem (as opposed to classical codes designed for the Hamming metric).

Another main contribution of this paper is the design of universal schemes that can provide both security and protection against errors. More precisely, we assume that the adversary is able not only to eavesdrop on $\mu$ arbitrarily chosen links, but also to inject $t$ erroneous packets anywhere in the network. We also assume that the network may suffer from a rank deficiency of at most $\rho$ packets. Previous work on this topic includes \cite{Ngai.Yang2007:DeterministicSEC} and \cite{Ngai.Yeung2009:Secure-Error-Correcting}, which propose secure-error-correcting schemes achieving a rate $n-\rho-2t-\mu$. However, these schemes are not universal and suffer from the same issues as the Cai-Yeung scheme discussed above.

Note that the naive approach to this problem would be simply to concatenate a secrecy encoder with an error control encoder. However, the security of such a scheme is not guaranteed because the error control encoder can potentially ``undo'' part of the secrecy encoding. On the other hand, if secrecy encoding is applied after error control encoding, then, due to the same reason, the concatenated scheme is not guaranteed to provide error control.

Our approach to this problem is to design a single scheme that simultaneously provides security and error control, by leveraging the corresponding properties of rank-metric codes. Our proposed scheme is universal and achieves the rate of $n - \rho - 2t - \mu$ packets. We show that this rate is indeed optimal, under the assumption of zero-error communication\footnote{If this assumption is relaxed to vanishingly small error probability, then higher rates may be achieved in some cases. See \cite{Yao++2010:NetCod-JammingEavesdropping}.}. This result (whose proof allows arbitrary packet lengths) generalizes a similar bound in \cite{Ngai.Yeung2009:Secure-Error-Correcting} that assumed packet length $m=1$ (i.e., a scalar linear outer code).

All the universal schemes proposed in this paper have a single limitation: the packet length must satisfy $m \geq n$. While this requirement is usually easily satisfied in the practice of random network coding (see, e.g., \cite{Chou++2003}), we show that it is also strictly necessary for universal communication. In other words, universal schemes that provide security and/or error control at the maximum rate do not exist if $m < n$. Thus, our proposed schemes are optimal also in the sense of requiring the smallest packet size among all universal schemes.

The remainder of the paper is organized as follows. Section~\ref{sec:preliminaries} presents a brief review of rank-metric codes and the basic model of linear network coding. In Section~\ref{sec:problem-formulation}, we formulate the problem of universal secure and reliable communication over a wiretap network, following the basic setup of the wiretap channel. In Section~\ref{sec:error-correction}, we start by addressing the special case where only error control is required. We prove a few auxiliary results that extend the results of \cite{Silva.Kschischang2009:Metrics}. Then, in Section~\ref{sec:noiseless}, we address the special case where only security is required. The complete scenario of both security and error control is addressed in Section~\ref{sec:noisy}. In Section~\ref{sec:practical-considerations}, we discuss the practical application of our proposed schemes, and show that they can be implemented in a convenient and very efficient manner. Finally, Section~\ref{sec:conclusion} presents our conclusions. 

Previous versions of this work appeared in \cite{Silva.Kschischang2008:ISIT,Silva2009(PhD),Silva.Kschischang2010:ISIT}.


\section{Preliminaries}
\label{sec:preliminaries}

\subsection{Notation}
\label{ssec:notation}

Let $\Fq^{n \times m}$ denote the set of all $n \times m$ matrices over $\Fq$, and set $\Fq^n \triangleq \Fq^{n \times 1}$ (i.e., the elements of $\Fq^n$ are always seen as \emph{column} vectors).
For $M \in \Fq^{n \times m}$ and $\calS \subseteq \{1,\ldots,n\}$, let $M_\calS$ denotes the submatrix of $M$ consisting of the rows indexed by $\calS$. Let $\linspan{M}$ denote the row space of matrix $M$.

\subsection{Rank-Metric Codes}
\label{ssec:rank-metric-codes}

A \emph{matrix code} is a nonempty set of matrices. The \emph{rank distance} between matrices $X,Y \in \Fq^{n \times m}$ is defined as
\begin{equation}\nonumber
  \dr(X,Y) \triangleq \rank(Y - X).
\end{equation}
As observed in \cite{Gabidulin1985,Roth1991:MaximumRankArrayCodes}, the rank distance is indeed a \emph{metric}. The \emph{minimum rank distance} of a matrix code $\calC \subseteq \Fq^{n \times m}$, denoted $\dr(\mathcal{C})$, is the minimum rank distance among all pairs of distinct codewords of $\mathcal{C}$.

Let $\Fqm$ be a degree $m$ extension of the finite field $\Fq$. Recall that $\Fqm$ is also a vector space over $\Fq$. Let $\phi_m: \Fqm \to \Fq^{1 \times m}$ be a vector space isomorphism. More concretely, $\phi_m$ expands an element of $\Fqm$ as a row vector over $\Fq$ according to some fixed basis for $\Fqm$ over $\Fq$. Similarly, for all $n,\ell$, let $\phi_m^{(n \times \ell)}: \Fqm^{n \times \ell} \to \Fq^{n \times \ell m}$ be the isomorphism defined by applying $\phi_m$ entry-wise, i.e.,
\begin{equation}\nonumber
  \phi_m^{(n \times \ell)}(X) = \mat{\phi_m(X_{11}) & \cdots & \phi_m(X_{1\ell}) \\ \vdots & & \vdots \\ \phi_m(X_{n1}) & \cdots & \phi_m(X_{n \ell}))}.
\end{equation}
We will remove the superscript from $\phi_m^{(n \times \ell)}$ when the dimensions of the argument are clear from the context. The rank distance between vectors $X,Y \in \Fqm^n$ and the minimum rank distance of a block code $\calC \subseteq \Fqm^n$ are defined, respectively, as $\dr(X,Y) \triangleq \dr(\phi_m(X),\phi_m(Y))$ and $\dr(\calC) \triangleq \dr(\phi_m(\calC))$.

The size of a matrix code (or of a block code over $\Fqm$) is bounded by the Singleton bound for the rank metric, which states that every $\calC \subseteq \Fq^{n \times m}$ with minimum rank distance $d$ must satisfy
\begin{equation}\label{eq:singleton-bound}
|\mathcal{C}| \leq q^{\max\{n,m\} (\min\{n,m\} - d + 1)}.
\end{equation}
Codes that achieve this bound are called \emph{maximum-rank-distance} (MRD) codes and they are known to exist for all choices of parameters $q$, $n$, $m$ and $d \leq \min\{n,m\}$ \cite{Gabidulin1985}.

For the case of an $[n,k]$ \emph{linear} block code over $\Fqm$ with minimum rank distance $d$, the Singleton bound (\ref{eq:singleton-bound}) becomes
\begin{equation}\label{eq:singleton-linear}
  d \leq \min\left\{1,\frac{m}{n}\right\} (n-k) + 1.
\end{equation}
Note that, for $m \geq n$, (\ref{eq:singleton-linear}) coincides with the classical Singleton bound for the Hamming metric. Indeed, when $m \geq n$, every MRD code is also MDS.

Note that, differently from classical coding theory, block codes are represented in this paper using \emph{column} vectors. However, to avoid confusion, the generator and parity-check matrices of a linear code will always be given in the standard orientation. Thus, if $G \in \Fqm^{k \times n}$ and $H \in \Fqm^{(n-k) \times n}$ are, respectively, the generator and parity-check matrices of an $[n,k]$ linear code $\calC \subseteq \Fqm^n$, then $\calC = \{G^T u: u \in \Fqm^k\} = \{x \in \Fqm^n: Hx = 0\}$.

We now describe an important family of rank-metric codes proposed by Gabidulin \cite{Gabidulin1985}. Assume $m \geq n$.
A \emph{Gabidulin code} is an $[n,k]$ linear code over $\Fqm$ defined
by the generator matrix
\begin{equation}\label{eq:generator-gabidulin}
G = \mat{
  g_0^{q^{0}} & g_1^{q^{0}} & \cdots & g_{n-1}^{q^{0}} \\
  g_0^{q^{1}} & g_1^{q^{1}} & \cdots & g_{n-1}^{q^{1}} \\
  \vdots & \vdots  & \ddots & \vdots \\
  g_0^{q^{k-1}} & g_1^{q^{k-1}} & \cdots & g_{n-1}^{q^{k-1}}
  }
\end{equation}
where the elements $g_0,\ldots,g_{n-1} \in \Fqm$ are linearly independent over $\Fq$. It is shown in \cite{Gabidulin1985} that the minimum rank distance of a Gabidulin code is $d = n-k+1$, so the code is MRD.

\subsection{Linear Network Coding}
\label{ssec:linear-network-coding}

A linear network coding system is described as follows. Consider a communication network represented by a directed multigraph with unit capacity edges, a single source node, and multiple destination nodes. Each link in the network is assumed to transport, free of errors, a \emph{packet} consisting of $m$ symbols from the finite field $\Fq$ (that is, a vector in $\Fq^{1 \times m}$). At every network use, the source node produces $n$ packets, represented as the rows of a matrix $X \in \Fq^{n \times m}$, and transmits evidence about these packets over the network. More precisely, for each of its outgoing links, the source node transmits a packet that is some $\Fq$-linear combination of the rows of $X$. Each of the remaining nodes behaves similarly, computing its outgoing packets as $\Fq$-linear combinations of its incoming packets. It follows that, for every link $e$, the packet $P_e$ transmitted over $e$ can be expressed (uniquely) as a linear combination of the rows of $X$, say $P_e = c_e X$. The coefficient vector $c_e \in \Fq^{1 \times n}$ is called the \emph{(global) coding vector} of $P_e$. Let $\calE$ denote the set of all network links, ordered according to some fixed ordering. A \emph{(global) coding matrix} $C \in \Fq^{|\calE| \times n}$ is defined such that, for all $e \in \calE$, $c_e$ is the row of $C$ indexed by $e$.

For analytical purposes, a receiver can be specified, without loss of generality, by the set of incoming links of the corresponding destination node. Let $\mathfrak{R}$ denote the collection of all receivers. Note that $\mathfrak{R}$ is a subset of the powerset of $\calE$. For $\calR \in \frakR$, let $Y(\calR) \in \Fq^{|\calR| \times m}$ denote the matrix whose rows are the packets received by receiver $\calR$. Then
\begin{equation}\nonumber
  Y(\calR) = C_\calR X.
\end{equation}
The network code is said to be feasible for a receiver $\calR$ if $\rank C_\calR = n$, otherwise it is rank-deficient. The \emph{rank deficiency} of a network code is defined as
\begin{equation}\nonumber
   \rho = n - \min_{\calR \in \mathfrak{R}} \rank C_\calR
\end{equation}
i.e., it is the maximum column-rank deficiency of $C_\calR$ among all receivers. Since, in a network coding context, rank deficiency is analogous to packet loss, a rank deficiency of $\rho$ may also be referred to as \emph{$\rho$ packet erasures}.

The system described above is referred to as an \emph{$(n \times m)_q$ linear coded network}. We may also call it an $(n \times m,\, k)_q$ linear coded network if its rank deficiency is $\rho = n-k$.

We can extend the above model to incorporate packet errors. More precisely, we assume that each packet transmitted on a link may be subject to the addition of an error packet before reception by the corresponding node. This is useful to model both internal adversaries (malicious nodes that inject erroneous packets) as well as external adversaries (unauthorized transmitters that intentionally create interference with the transmitted signals). Similarly as above, this communication model can be described more concisely using a matrix framework. Suppose the packet transmitted on link $j$ changes from $P_j$ to $P_j'$. Then, due to linearity of the network, the packet transmitted on link $i$ changes from $P_i$ to $P_i' = P_i + F_{i,j} (P_j' - P_j)$, for some $F_{i,j} \in \Fq$. Let $F \in \Fq^{|\calE| \times |\calE|}$ be the matrix whose $(i,j)$ entry is $F_{i,j}$. Then the matrix received by receiver $\calR$ is given by
\begin{equation}\nonumber
  Y(\calR) = C_\calR X + F_\calR Z
\end{equation}
where $Z \in \Fq^{|\calE| \times m}$ is the matrix corresponding to the error packets injected on all links.

Note that the above model is applicable even if the network contains cycles and/or delays.

\section{Problem Formulation}
\label{sec:problem-formulation}

We start by describing a generic wiretap channel. Let $\calS$, $\calX$, $\calY$ and $\calW$ be sets. A transmitter wishes to communicate a message $S \in \calS$ reliably to a receiver but secretly from an eavesdropper. There is a channel among the three parties that takes $X \in \calX$ from the transmitter and delivers $Y \in \calY$ to the receiver and $W \in \calW$ to the eavesdropper. The channel is specified by some distribution $P(Y,W|X)$. The transmitter generates $X$ by a stochastic encoding of $S$, according to some distribution $P(X|S)$. Upon reception of $Y$, the receiver makes a guess that the transmitted message is $\dec(Y)$, according to some decoding function $\dec \colon \calY \to \calS$. The eavesdropper, on the other hand, attempts to obtain information about $S$ based on the observation $W$. Together, the encoder $P(X|S)$ and the decoder $\dec(\cdot)$ specify the coding scheme. Note that $S$ and $(Y,W)$ are assumed to be conditionally independent given $X$.

\subsection{Communication Requirements}
\label{ssec:requirements}

We now describe the requirements that a coding scheme must satisfy, for the purposes of this paper.

\subsubsection{Zero-error communication}

For all $x \in \calX$, let the \emph{fan-out set}
\begin{equation}\nonumber
  \calY_x \triangleq \{y \in \calY \colon P(y|x) > 0\}
\end{equation}
denote the set of all channel outputs that could \emph{possibly} occur when the channel input is $x$. Similarly, for all $s \in \calS$, let
\begin{align}
\calX_s &\triangleq \{x \in \calX \colon P(x|s) > 0\} \nonumber \\
\calY_{(s)} &\triangleq \{y \in \calY \colon P(y|s) > 0\} = \bigcup_{x \in \calX_s}\, \calY_x. \label{eq:fan-out-overall}
\end{align}
The scheme is said to be \emph{zero-error} if
\begin{equation}\label{eq:condition-zero-error}
  \dec(y) = s, \quad \text{for all $y \in \calY_{(s)}$ and all $s \in \calS$}
\end{equation}
that is, the receiver can always uniquely determine the message\footnote{Our focus on zero-error communication is motivated by the goal of guaranteeing reliability in the presence of adversarial jammers. Since an adversary will attempt to disrupt communication whenever such a possibility \emph{exists}, requiring zero-error (``foolproof'') communication appropriately captures the worst-case nature of the problem.}. We may also refer simply to the encoder $P(X|S)$ and consider it zero-error if there \emph{exists} some decoding function $\dec(\cdot)$ satisfying (\ref{eq:condition-zero-error}). It is easy to see that an encoder is zero-error if and only if the sets $\calY_{(s)}$, $s \in \calS$, are all pairwise disjoint.

Note that condition (\ref{eq:condition-zero-error}) differs from the more usual notion of reliability where the probability of decoding error gets arbitrarily close to zero as the number of channel uses increases. Here, not only the probability of error must be exactly zero (as in zero-error information theory \cite{Korner.Orlitsky1998}), but also the channel can be used only once. While the constraint on a single channel use may seem restrictive, note that in many practical situations, in particular in network coding, a single message may already be large enough to encompass the whole communication session, so that further channel uses are not allowed.

\subsubsection{Perfect secrecy}
The scheme is said to be \emph{perfectly secret} if absolutely no information is leaked to the eavesdropper, i.e.,
\begin{equation}\label{eq:condition-perfect-secrecy}
  I(S;W) = 0.
\end{equation}
Equivalently, the uncertainty about the message is not reduced by the eavesdropper's observation.

Throughout the paper, we will use the word \emph{secure} as a synonym for \emph{secret}. We will also refer to a perfectly secure scheme simply as \emph{secure}.

Note that condition (\ref{eq:condition-perfect-secrecy}) corresponds to perfect secrecy in the Shannon sense \cite{Shannon1949:Secrecy} and is stronger than the usual notion of secrecy in the information-theoretic security literature \cite{Liang++2009}, where the average information leakage (per channel use) gets arbitrarily close to zero as the number of channel uses increases.

\subsection{Wiretap Networks}
\label{ssec:wiretap-networks}

We now consider the case where the wiretap channel is a linear network coding system potentially subject to errors. Consider an $(n \times m)_q$ linear coded network specified by matrices $C \in \Fq^{|\calE| \times n}$ and $F \in \Fq^{|\calE| \times |\calE|}$ and a set of receivers $\mathfrak{R} \subseteq \calP(\calE)$, where $\calE = \{1,\ldots,|\calE|\}$ denotes the set of network edges. The network is used to communicate a message $S \in \calS$ to each receiver, which is done by encoding $S$ into an input matrix $X \in \Fq^{n \times m}$ for transmission over the network.

As described in Section~\ref{ssec:linear-network-coding}, the output matrix at a receiver $\calR \in \mathfrak{R}$ is given by
\begin{equation}\nonumber{}
  Y(\calR) = C_\calR X + F_{\calR} Z
\end{equation}
where $Z \in \Fq^{|\calE| \times m}$ denotes the matrix of error packets. Let
\begin{equation}\nonumber
  \calZ_x \triangleq \{z \in \Fq^{|\calE| \times m} \colon P(z|x) > 0\}
\end{equation}
denote the set of all \emph{possible} values for $Z$ when the input matrix is $x$. Then the set of all possible $Y(\calR)$ given $x$ is obtained as
\begin{equation}\nonumber
  \calY_x(\calR) = \left\{y \in \Fq^{|\calR| \times m} \colon y = C_\calR x + F_{\calR} z,\, z \in \calZ_x \right\}.
\end{equation}
Since there are multiple receivers, a coding scheme for such a network consists of not only an encoder $P(X|S)$ but also a decoding function for each receiver. Accordingly, we say that the scheme is zero-error if it is zero-error for each individual receiver $\calR \in \mathfrak{R}$.

For the remainder of the paper, we will focus on the case where $Z$ has at most $t$ nonzero rows, i.e.,
\begin{equation}\nonumber
  \calZ_x \triangleq \{z \in \Fq^{|\calE| \times m} \colon \wt(z) \leq t\}
\end{equation}
where $\wt(Z)$ denotes the number of nonzero rows of $Z$. In this case, a zero-error scheme is said to be a \emph{$t$-error-$\rho$-erasure-correcting} scheme, where $\rho$ denotes the rank deficiency of the linear coded network. Note that even when $t=0$, a $0$-error-$\rho$-erasure-correcting scheme must still be able to guarantee reliable (zero-error) communication for all receivers, i.e., it must able to make up for the rank deficiency experienced by the receivers.

Suppose there is an eavesdropper who can observe the packets transmitted on a subset of links $\calI \subseteq \calE$. The corresponding matrix observed by the eavesdropper is given by
\begin{equation}\nonumber
  C_\calI X + F_{\calI} Z.
\end{equation}
Here, we assume the worst case where the eavesdropper has access to the matrix $Z$ (possibly because $Z$ was selected by the eavesdropper), so we define the eavesdropper observation as
\begin{equation}\nonumber
  W(\calI) = C_\calI X.
\end{equation}
Consider the case where the eavesdropper is allowed to \emph{arbitrarily} choose any $\calI \subseteq \calE$ with $|\calI| \leq \mu$. Since the eavesdropper may choose $\calI$ in a worst-case or adversarial fashion, this situation may be modeled mathematically by assuming that there are multiple eavesdroppers, each with one of the allowed subsets $\calI$. Accordingly, we say that the scheme is \emph{secure under $\mu$ observations} if it is perfectly secure for all $\calI$ such that $|\calI| \leq \mu$.

Under this model, $\mu$ may be viewed as a security parameter, while $t$ may be viewed as a reliability parameter.

\begin{remark}
  As observed in \cite{Rouayheb.Soljanin2007}, the type II wiretap channel of \cite{Ozarow.Wyner1984} can be viewed as the special case of a two-node network with a single receiver $\calR = \calE$, where $|\calE| = n$, $t=0$ and $C$ is an identity matrix.
\end{remark}

\subsection{Universal Schemes}

The specification of a wiretap network requires the specification of $C$, $F$ and $\mathfrak{R}$, as well as $m$, $\mu$ and $t$. As a consequence, the properties of a coding scheme designed for a network are tied to the particular network code used; there is no guarantee that the scheme will work well over other networks.

In this paper, we are interested in universal schemes, i.e., schemes that share the same property (i.e., $t$-error-$\rho$-erasure-correcting, secure under $\mu$ observations) for \emph{all} possible network codes. As we shall see, this approach not only has practical benefits but also greatly simplifies the theoretical analysis.

\medskip
\begin{definition}\label{def:universally-correcting}
  A coding scheme for an $(n \times m)_q$ linear coded network is \emph{universally $t$-error-$\rho$-erasure-correcting} if it is zero-error under the fan-out set
\begin{multline}\nonumber
  \calY_x = \left\{(A,y) \in \Fq^{n \times n} \times \Fq^{n \times m} \,\right|\, y = Ax + Z,\, \\ \left. \rank A \geq n-\rho,\, \rank Z \leq t,\, Z \in \Fq^{n \times m} \right\}.
\end{multline}
\end{definition}
\medskip

Note that, to incorporate the fact that the matrix $C_\calR$ is known at receiver $\calR$, we have to include the matrix $A$ as part of the channel output. Contrasted with the models in \cite{Silva.Kschischang2009:Metrics}, the model in Definition~\ref{def:universally-correcting} may be interpreted as a \emph{worst-case} coherent network coding channel.

\medskip
\begin{proposition}
  A universally $t$-error-$\rho$-erasure-correcting scheme for an $(n \times m)_q$ network is $t$-error-$\rho$-erasure-correcting for \emph{any} $(n \times m,\,n-\rho)_q$ network regardless of the network code or the set of receivers.
\end{proposition}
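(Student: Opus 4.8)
The plan is to reduce, at each receiver of an arbitrary $(n \times m,\, n-\rho)_q$ network, the received observation to an instance of the universal fan-out set of Definition~\ref{def:universally-correcting}, and then to reuse the universal decoder verbatim. Fix a receiver $\calR \in \frakR$. By the definition of rank deficiency, every receiver of such a network satisfies $\rank C_\calR \geq n-\rho$. The receiver observes $Y(\calR) = C_\calR X + F_\calR Z$ with $\wt(Z) \leq t$ and, in the coherent setting, knows $C_\calR$. I want to manufacture from the observable pair $(C_\calR, Y(\calR))$ a pair $(A,y)$ that belongs to the universal fan-out set $\calY_X$ of the transmitted matrix $X$.

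First I would bound the effective distortion. Since $Z$ has at most $t$ nonzero rows, $\rank(F_\calR Z) \leq \rank Z \leq \wt(Z) \leq t$; write $E \triangleq F_\calR Z$. Next, using $C_\calR$ alone, select an index set $\calJ$ of $r \triangleq \rank C_\calR$ rows for which $(C_\calR)_\calJ$ has full row rank, and set
\begin{equation}\nonumber
A = \mat{(C_\calR)_\calJ \\ 0}, \qquad y = \mat{Y(\calR)_\calJ \\ 0} \in \Fq^{n \times m}.
\end{equation}
Then $\rank A = r \geq n-\rho$, and $y = AX + Z'$ where $Z' = \mat{E_\calJ \\ 0}$ satisfies $\rank Z' \leq \rank E \leq t$. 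The crucial feature is that the map $(C_\calR, Y(\calR)) \mapsto (A,y)$ is computed entirely from observable data, with no recourse to the unknown $X$ or $Z$.

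Consequently $(A,y) \in \calY_X$, and hence $(A,y) \in \calY_{(s)}$ whenever $X \in \calX_s$. Taking the receiver's decoder to send $Y(\calR)$ to $\dec(A,y)$, where $\dec$ is the decoder furnished by the universal zero-error hypothesis, condition~(\ref{eq:condition-zero-error}) in the universal model gives the correct message for every output in $\calY_{(s)}(\calR)$ and every $s$. As $\calR$ was arbitrary, the scheme is zero-error for every receiver, i.e.\ $t$-error-$\rho$-erasure-correcting, on every such network.

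The only steps requiring care are (i) turning the row-support bound $\wt(Z) \leq t$ into the rank bound $\rank Z' \leq t$ of the universal model---immediate, since row support dominates rank and this domination survives both left-multiplication by $F_\calR$ and the selection of the rows $\calJ$---and (ii) verifying that the passage to the square model uses only quantities available at the receiver, so that it yields a bona fide decoder rather than a genie-aided one. I do not anticipate any deeper obstruction: once the square embedding $A$ is obtained by zero-padding a maximal independent set of rows of $C_\calR$, everything else is bookkeeping.
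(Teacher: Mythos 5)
Your proof is correct and takes essentially the same approach as the paper: the paper left-multiplies the observation by any matrix $T \in \Fq^{n \times |\calR|}$ with $\rank TC_\calR = \rank C_\calR$ to land in the universal fan-out set, and your zero-padded row selection $A = \bigl[\begin{smallmatrix}(C_\calR)_\calJ \\ 0\end{smallmatrix}\bigr]$ is exactly one concrete choice of such a $T$ (a row-selection matrix padded with zero rows). The supporting observations---that $\rank(F_\calR Z) \leq \wt(Z) \leq t$ and that the reduction uses only receiver-observable data---match the paper's reasoning.
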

\begin{proof}
  We will show that, for any $C_\calR \in \Fq^{|\calR| \times n}$ with $\rank C_\calR \geq n-\rho$, any $F_\calR \in \Fq^{|\calE| \times |\calE|}$, and any $Z \in \Fq^{|\calE| \times m}$ with $\wt(Z) \leq t$, a receiver that knows $C_\calR$ and $Y(\calR) = C_\calR X + F_\calR Z$ can successfully decode using a universal decoder. First, since $n \geq \rank C_\calR$, there exists, regardless of $|\calR|$, some matrix $T \in \Fq^{n \times |\calR|}$ such that $\rank TC_\calR = \rank C_\calR$. Now, since $\rank TC_\calR \geq n-\rho$ and $\rank TF_\calR Z \leq t$, we have that $(TC_\calR,TY(\calR)) \in \calY_x$. Thus, the receiver can successfully decode by applying the universal decoder on $(A,Y) = (TC_\calR,TY(\calR))$.
\end{proof}
\medskip

\begin{definition}
  Consider an $(n \times m)_q$ linear coded network with input matrix $X \in \Fq^{n \times m}$. A coding scheme is \emph{universally secure under $\mu$ observations}
  if it is perfectly secure
  for each eavesdropper observation $W = BX$, for all $B \in \Fq^{\mu \times n}$.
\end{definition}
\medskip

Clearly, a universally secure scheme is always secure regardless of the network code.

Focusing on universal schemes immediately offers the analytical advantage of not having to specify the network topology and the network code, except for the parameters $n$, $m$, $q$, $\rho$. These parameters provide an interface between the problems of network code design and end-to-end code design, which then become completely independent.

Of course, the rates achieved by a universal scheme could potentially be smaller than those of non-universal schemes; equivalently, to achieve an optimal rate, a universal scheme may impose certain constraints on the interface parameters. Our goal in this paper is to determine exactly what rates are achievable by universal schemes, as well as to construct computationally efficient schemes that achieve these rates.

\section{Universal Error Correction}
\label{sec:error-correction}

We start by considering the case where $\mu = 0$, i.e., there is no eavesdropper (or security is not a concern).

Below we show a result that, while similar to the results in \cite{Silva.Kschischang2009:Metrics}, is not available there, as model considered here is slightly different.

\medskip
\begin{theorem}\label{thm:error-correction-deterministic-encoder}
  Consider a deterministic encoder $X = \enc(S)$, where $\enc\colon \calS \to \calX$, and let $\calC = \{\enc(s),\, s \in \calS\}$. Then the encoder is universally $t$-error-$\rho$-erasure-correcting if and only if $\dr(\calC) > 2t + \rho$.
\end{theorem}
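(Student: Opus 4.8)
The plan is to characterize when the universal decoder can recover $S$ from any channel output $(A,y) \in \calY_x$, and to show this is equivalent to a separation condition on the codewords of $\calC$. Since the encoder is deterministic, decoding $S$ is equivalent to decoding the codeword $x = \enc(S)$, so the scheme is universally $t$-error-$\rho$-erasure-correcting if and only if the fan-out sets $\calY_x$, taken over distinct codewords $x \in \calC$, are pairwise disjoint. This is the zero-error condition specialized to a deterministic encoder. Hence the whole theorem reduces to proving the following equivalence: for two distinct codewords $x, x' \in \calC$, the sets $\calY_x$ and $\calY_{x'}$ intersect if and only if $\dr(x,x') \leq 2t+\rho$.

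First I would unpack what it means for $(A,y)$ to lie in both $\calY_x$ and $\calY_{x'}$. By Definition~\ref{def:universally-correcting}, this requires matrices $A, A' \in \Fq^{n \times n}$ with $\rank A \geq n-\rho$ and $\rank A' \geq n-\rho$, and error matrices $Z, Z' \in \Fq^{n \times m}$ with $\rank Z \leq t$ and $\rank Z' \leq t$, such that
\begin{equation}\nonumber
  Ax + Z = A'x' + Z'.
\end{equation}
Rearranging, a collision exists precisely when $Ax - A'x' = Z' - Z$ is achievable, i.e., when the difference $Ax - A'x'$ can be written as a matrix of rank at most $2t$. So the confusability of $x$ and $x'$ becomes a question about the minimum rank of $Ax - A'x'$ over all admissible $A, A'$.

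The heart of the argument is therefore to show that
\begin{equation}\nonumber
  \min_{\substack{\rank A \geq n-\rho \\ \rank A' \geq n-\rho}} \rank(Ax - A'x') = \max\{0,\, \dr(x,x') - \rho\},
\end{equation}
after which a collision is possible iff this minimum is at most $2t$, i.e., iff $\dr(x,x') - \rho \leq 2t$, which rearranges to the stated threshold $\dr(x,x') \leq 2t+\rho$. For the inequality $\leq$, I would construct explicit choices of $A, A'$ (for instance taking $A = A'$ equal to a suitable rank-$(n-\rho)$ projection that annihilates $\rho$ dimensions of the row space of $x - x'$) to drive the rank of the difference down by $\rho$. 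For the reverse inequality $\geq$, I would use the rank-nullity / subadditivity properties of rank: writing $Ax - A'x' = A(x-x') + (A-A')x'$ and bounding below, or more cleanly bounding $\rank(Ax-A'x') \geq \dr(x,x') - (\text{rank drop from each } A)$, noting each $A$ of rank $\geq n-\rho$ contributes a row-space deficiency of at most $\rho$.

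The main obstacle I expect is the lower bound direction, where I must argue that no choice of the two rank-deficient matrices $A, A'$ can collapse the rank of $Ax - A'x'$ below $\dr(x,x') - \rho$. The subtlety is that $A$ and $A'$ act on $x$ and $x'$ separately, so a naive triangle-inequality bound on rank may be too loose; I would need to exploit that $\rank(Ax - A'x') \geq \rank\big(A(x - x')\big) - \rank\big((A-A')x'\big)$ is the wrong grouping, and instead reduce to the single-matrix case by noting it suffices to consider $A = A'$ (since restricting to a common row projection can only help the adversary create a collision), thereby converting the problem into bounding $\rank\big(A(x-x')\big) \geq \rank(x-x') - \rho$ for $\rank A \geq n-\rho$, which follows directly from the fact that multiplication by $A$ reduces row rank by at most its rank deficiency $\rho$. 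Establishing rigorously that the worst case is attained at $A = A'$ is the delicate step that ties the whole equivalence together.
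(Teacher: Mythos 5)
There is a genuine gap, and it sits exactly at the step you flag as delicate. Your reduction of the zero-error condition to pairwise disjointness of the fan-out sets $\calY_x$ over distinct codewords is correct, but you then misread Definition~\ref{def:universally-correcting}: the transfer matrix $A$ is itself part of the channel output (this is the coherent model--the receiver knows $A$), so a collision $(A,y)\in\calY_x\cap\calY_{x'}$ forces the \emph{same} matrix $A$ in both representations, $y = Ax + Z = Ax' + Z'$. Your two-matrix minimization $\min \rank(Ax - A'x')$ over independently chosen $A,A'$ is therefore not the right confusability criterion, and your proposed repair--restricting to $A=A'$ on the grounds that a common projection ``can only help the adversary''--is false as stated, and backwards: allowing $A \neq A'$ strictly \emph{enlarges} the set of collisions, so the two-matrix minimum can be strictly smaller than $\max\{0,\dr(x,x')-\rho\}$. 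Concretely, take $t=\rho=0$ and $x' = Mx$ with $M$ invertible, $x'\neq x$; then $A = M$, $A' = I$ gives $Ax - A'x' = 0$, so your criterion would declare $x$ and $x'$ confusable even though $\dr(x,x')\geq 1$--contradicting the theorem, since with $\rho=0$ and $A$ known the receiver simply inverts $A$ and recovers $x$ exactly. (The two-matrix quantity is relevant only in the \emph{noncoherent} model, where correction capability is governed by the injection metric of \cite{Silva.Kschischang2009:Metrics}, not the rank metric.) The fix is not a reduction lemma but simply reading the definition: $A=A'$ is automatic because $A$ appears in the output pair.

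Once that is corrected, your program goes through and essentially coincides with the paper's argument. Confusability of $x,x'$ becomes $\min_{\rank A \geq n-\rho} \rank\bigl(A(x-x')\bigr) \leq 2t$. The upper bound $\max\{0,\,\dr(x,x')-\rho\}$ is attained by choosing the right null space of $A$ as a $\min\{\rho,d\}$-dimensional subspace of the \emph{column} space $\colsp(x-x')$--not the row space as you wrote, since $A$ multiplies on the left (the paper's own proof writes $\linspan{x_2-x_1}$, which by its notation is the row space, but what is meant there is likewise the column space); combined with the split $E = E_1 - E_2$ with $\rank E_1 \leq t$, $\rank E_2 \leq t$, this is exactly the paper's converse construction. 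The lower bound $\rank\bigl(A(x-x')\bigr) \geq \rank(x-x') - \rho$ is Sylvester's rank inequality and yields the direct part, which the paper does not reprove but imports from \cite{Silva.Kschischang2009:Metrics}; so your corrected sketch would in fact be a self-contained proof of both directions. But as written, the proof would stall at the unjustifiable (indeed false) reduction step, so it does not stand.
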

\begin{proof}
  The correction guarantee has been proved in \cite{Silva.Kschischang2009:Metrics}. We now prove the converse. Suppose $\dr(\calC) = d \leq 2t + \rho$ and let $x_1,x_2 \in \calC$ be such that $x_1 \neq x_2$ and $\rank(x_2 - x_1) = d$. Let $A \in \Fq^{n \times n}$ be a matrix whose right null space is a subspace of $\linspan{x_2 - x_1}$ with dimension $\min\{\rho,d\}$. Let $E = A(x_2 - x_1)$. Then $\rank A \geq n - \rho$ and $\rank E = d - \min\{\rho,d\} = \max\{d-\rho,0\} \leq 2t$. Let $E_1,E_2 \in \Fq^{n \times m}$ be such that $E = E_1 - E_2$, $\rank E_1 \leq t$, and $\rank E_2 \leq t$. Then $y = Ax_1 + E_1 = A x_2 + E_2$, and therefore $(A,y) \in \calY_{x_1} \cap \calY_{x_2}$. Since the encoder is deterministic, $x_1$ and $x_2$ must correspond to distinct messages, which implies that the scheme is not zero-error.
\end{proof}
\medskip

Theorem~\ref{thm:error-correction-deterministic-encoder} shows that, in the case of a deterministic encoder, the correction capability of a scheme is characterized precisely by the minimum rank distance of the image of the encoder. Thus, the problem can be solved by (and only by) a rank-metric code with sufficiently large minimum rank distance. While Theorem~\ref{thm:error-correction-deterministic-encoder} is concerned only with the encoder, computationally efficient decoders (for a Gabidulin code) have been proposed in \cite{Silva++2008} (see also \cite{Silva.Kschischang2009:FastDecoding-ISIT,Silva2009(PhD)}) for all values of $\rho$ and $t$.

The theorem also shows a tradeoff between errors and erasures that is analogous to that of classical coding theory; namely, an error can be traded for two erasures, and vice-versa, with the ``exchange currency'' being the minimum rank distance of the code. 

It is important to note that the characterization in Theorem~\ref{thm:error-correction-deterministic-encoder} is valid only for deterministic encoding. In the case of stochastic encoding, it is conceivable that the same message $s$ could give rise to two distinct codewords $x_1$ and $x_2$ with small rank distance (so that they would be indistinguishable at the receiver), yet the \emph{message} $s$ itself could be successfully decoded. Thus, while the direct part of the theorem still holds (as long as $H(X|S) = 0$), the converse does not. Surprisingly, however, the same interplay between errors and erasures that exists for a deterministic encoder (namely, one error is equivalent to two erasures) still remains for a stochastic encoder, as shown in the next result.

\medskip
\begin{theorem}\label{thm:error-correction-stochastic-encoder}
  Consider an $(n \times m)_q$ linear coded network. An encoder that is universally $t$-error-$\rho$-erasure-correcting is also universally $t'$-error-$\rho'$-erasure-correcting for all $t',\rho'\geq 0$ such that $2t'+\rho' \leq 2t+\rho$.
\end{theorem}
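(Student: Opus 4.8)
The plan is to prove the contrapositive, working entirely with the zero-error characterization rather than with minimum rank distance. Recall the remark following~(\ref{eq:condition-zero-error}) that an encoder is zero-error exactly when the overall fan-out sets $\calY_{(s)}$ are pairwise disjoint; here these sets are formed from the parameterized fan-out set of Definition~\ref{def:universally-correcting}, so write $\calY^{(t,\rho)}_{x}$ for that set with parameters $(t,\rho)$. Assuming the encoder is universally $t$-error-$\rho$-erasure-correcting, I would suppose for contradiction that it fails to be universally $t'$-error-$\rho'$-erasure-correcting. Then there are distinct messages $s_1 \neq s_2$, inputs $x_1 \in \calX_{s_1}$ and $x_2 \in \calX_{s_2}$, and a common output $(A,y) \in \calY^{(t',\rho')}_{x_1} \cap \calY^{(t',\rho')}_{x_2}$. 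Unpacking Definition~\ref{def:universally-correcting}, this furnishes a \emph{single} matrix $A$ with $\rank A \geq n-\rho'$ together with $Z_1,Z_2$ of rank at most $t'$ satisfying $Ax_1 + Z_1 = Ax_2 + Z_2$. The objective is then to manufacture from this data a common output at parameters $(t,\rho)$, contradicting the hypothesis.

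The crucial step is to bound the rank of the input difference $D \triangleq x_1 - x_2$. From $AD = Z_2 - Z_1$ one has $\rank(AD) \leq 2t'$, and Sylvester's rank inequality $\rank(AD) \geq \rank A + \rank D - n$ then gives
\begin{equation}\nonumber
  \rank D \leq \rank(AD) + n - \rank A \leq 2t' + \rho' \leq 2t + \rho.
\end{equation}
This lone inequality is what drives the argument and, notably, renders the separate comparisons of $t'$ with $t$ and of $\rho'$ with $\rho$ irrelevant. I would then take $\tilde A$ to be a projection of $\Fq^n$ whose kernel is a subspace of $\colsp D$ of dimension $\min\{\rank D,\, \rho\}$. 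Such an $\tilde A$ satisfies $\rank \tilde A \geq n - \rho$, while $\tilde A D$ has rank $\max\{0,\, \rank D - \rho\} \leq 2t$. Setting $\tilde E \triangleq \tilde A D$ and splitting it via a rank factorization as $\tilde E = \tilde Z_2 - \tilde Z_1$ with $\rank \tilde Z_i \leq t$, the point $(\tilde A,\, \tilde A x_1 + \tilde Z_1)$ is readily verified to lie in $\calY^{(t,\rho)}_{x_1} \cap \calY^{(t,\rho)}_{x_2}$, which (since $s_1 \neq s_2$) is exactly the collision that contradicts universal $t$-error-$\rho$-erasure-correction.

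The main conceptual hurdle, and the step I expect to require the most care, is precisely this reduction to a bound on $\rank(x_1 - x_2)$: instead of case-splitting on whether one is trading errors for erasures or the reverse, Sylvester's inequality collapses the entire statement into $\rank(x_1-x_2) \leq 2t+\rho$, after which a single projection-and-split construction closes all cases uniformly. I would stress that this route never invokes the minimum rank distance, so—unlike the converse of Theorem~\ref{thm:error-correction-deterministic-encoder}—it remains valid for \emph{stochastic} encoders, where $x_1$ and $x_2$ may differ even when the underlying messages are still decodable; the only structural fact used is that the two inputs originate from distinct messages, which is exactly what is needed to violate zero-error decodability.
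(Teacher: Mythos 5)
Your proposal is correct, and it takes a genuinely different route from the paper's own proof. The paper transforms the given $(t',\rho')$-collision \emph{in place}, with a case split: when $t'>t$ it left-multiplies $A'$ and $E'=A'(x_2-x_1)$ by a matrix $T$ whose right kernel removes $2(t'-t)$ dimensions from the column space of $E'$ (trading error rank for rank deficiency, invoking Sylvester's inequality only to lower-bound $\rank TA'$), and when $\rho'>\rho$ it adds a rank-$2\Delta$ perturbation $R$ to $A'$ (trading rank deficiency for error rank). You instead apply Sylvester's inequality once, at the outset, to distill the collision into the single invariant $\rank(x_1-x_2)\leq 2t'+\rho'\leq 2t+\rho$, discard $A'$ entirely, and rebuild a fresh $(t,\rho)$-collision via your projection-and-split construction---which is exactly the construction the paper already uses in the converse of Theorem~\ref{thm:error-correction-deterministic-encoder}, here applied to two inputs originating from distinct messages rather than to two codewords of a deterministic encoder, which is precisely the adaptation the stochastic setting requires. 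Your route buys uniformity: no case analysis at all, and in particular no parity issue with $\rho'-\rho$ (the paper's second case writes $\rho'-\rho=2\Delta>0$, tacitly taking it even; this is repairable but deserves a remark), plus the clean intermediate equivalence that a common output at parameters $(t',\rho')$ exists for $x_1,x_2$ if and only if $\dr(x_1,x_2)\leq 2t'+\rho'$, which makes transparent that the collision structure depends on $(t,\rho)$ only through $2t+\rho$ and simultaneously re-derives the converse of Theorem~\ref{thm:error-correction-deterministic-encoder}. What the paper's route offers in exchange is only that the new witness is an explicit transformation of the old one ($A=TA'$ or $A=A'+R$); it carries no additional generality. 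A final point in your favor: you correctly locate the kernel of the projection inside $\colsp(x_1-x_2)$, whereas the paper's phrasing (right null space contained in $\linspan{E'}$, with $\linspan{\cdot}$ defined as a row space) conflates row and column spaces.
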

\begin{proof}
  See the Appendix.
\end{proof}
\medskip

The result of Theorem~\ref{thm:error-correction-stochastic-encoder} will be crucially used in Section~\ref{ssec:noisy-converse} to prove a converse theorem for networks subject to errors and observations.

A consequence of Theorem~\ref{thm:error-correction-stochastic-encoder} is that we could safely restrict attention to encoders that are universally $\rho$-erasure-correcting (that is, universally $0$-error-$\rho$-erasure-correcting), since erasure-correction capability can be naturally traded for error-correction capability. However, as before, note that the result of Theorem~\ref{thm:error-correction-stochastic-encoder} applies only to the encoder, i.e., it is in principle not trivial to obtain a decoder for one scheme given a decoder for the other.

\section{Perfect Secrecy for Noiseless Networks}
\label{sec:noiseless}

In this section we treat the case of an $(n \times m, n)$ linear coded network subject to $\mu$ observations but no errors---so that the channel from the transmitter to each receiver is noiseless. Thus, each receiver can correctly recover the channel input $X$.

From now on, unless otherwise mentioned, we assume that the message space is $\calS = \Fq^{k \times m}$, so that the message is a $k \times m$ matrix. The rows of $S$, denoted $S_1,\ldots,S_k \in \Fq^{1 \times m}$, may then be viewed as packets. All logarithms are taken to the base $q^m$, so that information is measured in $q^m$-ary units, or \emph{packets}.

\subsection{Preliminaries}
\label{ssec:noiseless-preliminaries}

We start by reviewing the basic idea of \emph{coset coding}, which was proposed by Ozarow and Wyner for the special case of the type II wiretap channel \cite{Ozarow.Wyner1984}, and later applied to the general case by Rouayheb and Soljanin \cite{Rouayheb.Soljanin2007}. The scheme requires each packet to be an element of a finite field, i.e., $\Fq^{1 \times m}$ must be a field. In the following, we assume that $m=1$.

Let $\calC$ be an $[n,n-k]$ linear code over $\Fq$ with parity-check matrix $H \in \Fq^{k \times n}$. The transmitter encodes $S$ into $X$ by choosing uniformly at random some $X \in \Fq^n$ such that $S = HX$. In other words, each message is viewed as a syndrome specifying a coset of $\calC$, and the transmitted word is randomly chosen among the elements of that coset. Upon reception of $X$, decoding is performed by simply computing the syndrome $S = HX$. Thus, the scheme is always zero-error.

For the special case of a type II wiretap channel, it can be shown \cite{Ozarow.Wyner1984} that the scheme is perfectly secure if $\calC$ is an MDS code and $k \leq n-\mu$. In general, however, this may not be sufficient. The following result is shown in \cite{Rouayheb.Soljanin2007}.

\medskip
\begin{theorem}[\cite{Rouayheb.Soljanin2007}]\label{thm:secrecy-condition-base-field}
  In the coset coding scheme described above, assume that the eavesdropper observes $W = BX$, where $B \in \Fq^{\mu \times n}$. If $I(S;W) = 0$, then $H(S) \leq n-\mu$. Moreover, if $H(S) = k = n-\mu$, then
  \begin{equation}\label{eq:secrecy-condition-base-field}
    I(S;W) = 0 \quad \iff \quad \linspan{H} \cap \linspan{B} = 0.
  \end{equation}
\end{theorem}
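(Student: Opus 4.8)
The plan is to work in the coset coding scheme where $\calS = \Fq$, the message is $S = HX$, and $X$ is uniform over the coset $\{x : Hx = S\}$. The eavesdropper sees $W = BX$. The key observation is that, since $X$ is uniform on a coset of the code $\calC = \ker H$, the joint distribution of $(S, W) = (HX, BX)$ is determined entirely by the stacked matrix $\mat{H \\ B}$ acting on a \emph{uniform} $X \in \Fq^n$. Specifically, I would note that if $X$ is uniform over all of $\Fq^n$, then $S$ is uniform over $\Fq^k$ (since $H$ has rank $k$), and the conditional law of $X$ given $S$ is exactly uniform on the corresponding coset — so the encoder's distribution coincides with the marginal induced by a globally uniform $X$. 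This lets me compute everything via ranks of $H$, $B$, and $\mat{H \\ B}$.

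\textbf{The entropy bound.}
For the first claim, I would use $I(S;W) = H(S) - H(S \mid W)$. Since $W = BX$ with $B \in \Fq^{\mu \times n}$, the observation $W$ takes at most $q^\mu$ values, so $H(W) \le \mu$ (in $q$-ary units). Then $H(S\mid W) \ge H(S) - H(W) \ge H(S) - \mu$ would not directly give the bound; instead I would argue via $I(S;W) = H(W) - H(W\mid S) \le H(W) \le \mu$, combined with the requirement $I(S;W)=0$. More carefully: assuming $I(S;W)=0$ means $S$ and $W$ are independent, so $H(S,W) = H(S) + H(W)$. But $(S,W) = \mat{H\\B}X$ is a deterministic function of $X$, hence $H(S,W) \le H(X) \le n$. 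Combining, $H(S) \le H(S) + H(W) = H(S,W) \le n$, which is not yet sharp; the sharpness $H(S) \le n - \mu$ should come from the fact that independence forces $W$ to retain its full entropy relative to $S$, so I would instead bound $H(S) = H(S,W) - H(W) \le n - H(W)$ and show that perfect secrecy forces $H(W)$ to be large. The cleanest route is: independence gives $H(S\mid W) = H(S)$, and since $(S,W)$ is a linear image of uniform $X$, a rank computation yields $H(S\mid W) = \rank\mat{H\\B} - \rank B$. This must equal $H(S) = \rank H = k$, forcing $\rank\mat{H\\B} = k + \rank B$, i.e. $\linspan{H}\cap\linspan{B} = 0$; since $\rank B \le \mu$ and $\rank\mat{H\\B}\le n$, we get $k \le n - \rank B$, and the bound $H(S)=k \le n-\mu$ follows when $B$ attains rank $\mu$ (the worst case the eavesdropper would choose).

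\textbf{The equivalence.}
For the second part, under $H(S)=k=n-\mu$, I would prove both directions using the identity $I(S;W) = H(S) - H(S\mid W)$ together with the rank formula $H(S\mid W) = \rank\mat{H\\B} - \rank B$ derived above. Perfect secrecy $I(S;W)=0$ is equivalent to $H(S\mid W) = H(S) = k$, i.e. to $\rank\mat{H\\B} = k + \rank B$, which is precisely the statement that the row spaces $\linspan{H}$ and $\linspan{B}$ intersect trivially. The main obstacle is establishing the rank-to-entropy dictionary rigorously: I must verify that for a uniform $X \in \Fq^n$ and any matrix $M$, $H(MX) = \rank M$ in $q$-ary units (immediate, since $MX$ is uniform on the column space, equivalently row space, of dimension $\rank M$), and that conditional entropies of linear images behave additively under the stacking. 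Once this linear-algebraic accounting is in place, the equivalence \eqref{eq:secrecy-condition-base-field} drops out directly from the rank identity, with no further computation.
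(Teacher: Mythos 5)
Your central device is sound and it does settle the equivalence \eqref{eq:secrecy-condition-base-field}: under the hypothesis $H(S)=k$ the message is uniform, so $X$ is globally uniform on $\Fq^n$ (all cosets of $\ker H$ have equal size), and then your rank dictionary is exact --- conditioned on $W=w$, $X$ is uniform on a coset of $\ker B$, so $S=HX$ is uniform on a coset of $H(\ker B)$, whose dimension is $(n-\rank B)-(n-\rank\mat{H \\ B})$, giving $H(S\mid W)=\rank\mat{H \\ B}-\rank B$ and hence $I(S;W)=\rank H+\rank B-\rank\mat{H \\ B}$, which vanishes iff $\linspan{H}\cap\linspan{B}=0$. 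That part is correct.

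The genuine gap is in your proof of the first claim. When you write ``this must equal $H(S)=\rank H=k$,'' you are again assuming $S$ uniform, but the assertion ``$I(S;W)=0\Rightarrow H(S)\leq n-\mu$'' is made for an \emph{arbitrary} message distribution, and your identity $H(S\mid W)=\rank\mat{H \\ B}-\rank B$ simply fails when $X$ is not globally uniform (the content of the claim is nontrivial precisely for nonuniform $S$: e.g.\ $k=n$ with $H(S)=n-1$ and $\mu=2$ must be ruled out). The repair is easy and stays within your framework: downgrade the equality to a support-size bound that needs no uniformity --- for each $w$ the set of values of $S$ compatible with $W=w$ is a coset of $H(\ker B)$, so $H(S\mid W)\leq\rank\mat{H \\ B}-\rank B\leq n-\rank B$ always, and independence gives $H(S)=H(S\mid W)\leq n-\rank B\leq n-\mu$ for full-rank $B$ (the worst case, which you correctly flag must be invoked). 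You should also simply delete the first half of that paragraph: the detours through $H(W)\leq\mu$ and $H(S,W)\leq n$, which you yourself concede are not sharp, add nothing. For comparison, the paper does not prove this theorem directly (it is quoted from Rouayheb--Soljanin) but proves the generalization Lemma~\ref{lem:zero-info-linear}, deliberately split into two one-sided inequalities under \emph{separate} hypotheses: the upper bound $I(S;W)\leq\rank H+\rank B-\rank\mat{H \\ B}$ requires only the coset encoder (no uniform $S$), via $I(S;W)=H(W)-H(X\mid S)+H(X\mid S,W)$ and cardinality bounds, while the lower bound requires only uniform $S$, via exhibiting $T_1,T_2$ with $T_1B=T_2H$ and hence $T_1W=T_2S$. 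Your single exact-equality computation needs both hypotheses simultaneously --- adequate for the equivalence here, but strictly less general than the paper's version, whose one-sided upper bound is exactly what gets reused later (Theorem~\ref{thm:universal-security}) for arbitrary message distributions.
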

\medskip

The above result can be used to design a network code based on a given parity-check matrix $H$ \cite{Rouayheb.Soljanin2007}. More precisely, the network code (i.e., the global coding matrix $C$) must be constructed in such a way that, for all $\calI$ with $|\calI| \leq \mu$, the matrix $B=C_\calI$ satisfies (\ref{eq:secrecy-condition-base-field}) for the given $H$. Note that, since there is always some $B$ violating (\ref{eq:secrecy-condition-base-field}), the scheme is not universal.

Although the case $m>1$ is not considered in \cite{Rouayheb.Soljanin2007}, it is easy to see that any scheme for $m=1$ can immediately be extended to $m>1$ by applying the scheme $m$ times in a component-wise fashion. Encoding is performed identically, by randomly choosing $X$ such that $S = HX$ (where $S$ and $X$ are now matrices), and the same holds for the decoding. Clearly, the resulting scheme retains exactly the same properties of the original one (in particular, non-universality).

\subsection{A Universal Scheme}
\label{ssec:noiseless-universal}

As we have seen above, in order to directly apply Ozarow-Wyner's coset coding scheme, packets must be elements of a finite field, and one way to achieve this is to assume $m=1$. Another approach, the one we propose in this paper, is to make use of the vector space isomorphism $\Fq^{1 \times m} \cong \Fqm$. In other words, we regard packets as elements of a finite field $\Fqm$; this is still compatible with $\Fq$-linear network coding since $\Fqm$ is a vector space over $\Fq$.

Theorem~\ref{thm:secrecy-condition-base-field} holds unchanged, provided we replace $\Fq$ with $\Fqm$ (note that we can regard $B \in \Fq^{\mu \times n}$ as a matrix over $\Fqm$, since $\Fq \subseteq \Fqm$). However, all the entries of $B$ still lie in the \mbox{\emph{subfield} $\Fq$}.
Since $\Fq \subseteq \Fqm$, we can regard $B \in \Fq^{\mu \times n}$ as a matrix over $\Fqm$. After replacing $\Fq$ with $\Fqm$, Theorem~\ref{thm:secrecy-condition-base-field} follows unchanged.

Under this interpretation, the variables $S \in \Fqm^k$, $X \in \Fqm^n$, $W \in \Fqm^\mu$ are now viewed as column vectors over $\Fqm$. For the purposes of Theorem~\ref{thm:secrecy-condition-base-field}, we can regard $B \in \Fq^{\mu \times n}$ as a matrix over $\Fqm$ (since $\Fq \subseteq \Fqm$). Then the theorem follows unchanged after replacing $\Fq$ with $\Fqm$. Note, however, that all the entries of $B$ \emph{still lie} in the \mbox{subfield $\Fq$}. As the number of possibilities for $H \in \Fqm^{k \times n}$ is now much larger as compared to $B$ (for $m>1$), it is conceivable that some $H$ exists satisfying (\ref{eq:secrecy-condition-base-field}) for all $B$. This can be seen as the crucial ingredient that enables universal security. A suitable choice of $H$ is given in the next theorem.

\medskip
\begin{theorem}\label{thm:MRD-vertical-nonsingular}
  Let $\calC$ be an $[n,n-k]$ linear code over $\Fqm$ with parity-check matrix $H \in \Fqm^{k \times n}$. If $\dr(\calC) = k+1$ and $\mu \leq n-k$, then
  \begin{equation}\label{eq:MRD-rank-condition}
    \rank \mat{H \\ B} = \rank H + \rank B, \quad \text{for all $B \in \Fq^{\mu \times n}$.}
  \end{equation}
Conversely, if $\mu = n-k$, then (\ref{eq:MRD-rank-condition}) holds only if  $\dr(\calC) = k+1$.
\end{theorem}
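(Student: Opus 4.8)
The plan is to reformulate the rank condition (\ref{eq:MRD-rank-condition}) as a statement about the minimum rank distance of the dual code $\calC^\perp = \rowsp{H}$, and then invoke the duality of MRD codes. Since $H$ is a parity-check matrix of an $[n,n-k]$ code, $\rank H = k$, and for any $B$ we have $\rank\mat{H \\ B} = \rank H + \rank B - \dim\bigl(\rowsp{H} \cap \rowsp{B}\bigr)$. Hence (\ref{eq:MRD-rank-condition}) is equivalent to $\rowsp{H} \cap \rowsp{B} = 0$. So I must show that $\rowsp{H}$ meets every $\rowsp{B}$, $B \in \Fq^{\mu \times n}$, trivially precisely when $\calC$ is MRD, in the stated ranges of $\mu$. (Throughout, vectors are written as rows; rank weight is transposition-invariant, so this matches the paper's definition via $\phi_m$.)

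The bridge between the two sides is an elementary lemma: a vector $v \in \Fqm^{1 \times n}$ satisfies $\rank v \leq \mu$ if and only if $v \in \rowsp{B}$ for some $B \in \Fq^{\mu \times n}$. For the ``if'' direction, each coordinate of $v = cB$ (with $c \in \Fqm^{1 \times \mu}$) lies in the $\Fq$-span of the $\mu$ entries of $c$, so $\rank v \leq \mu$. For the ``only if'' direction, taking an $\Fq$-basis $\beta_1,\dots,\beta_r$ of the coordinate span of $v$ (so $r = \rank v \leq \mu$) and expressing each coordinate in this basis exhibits $v = (\beta_1,\dots,\beta_r)\,B'$ with $B' \in \Fq^{r \times n}$, which I pad with zero rows to size $\mu \times n$. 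Consequently, as $B$ ranges over $\Fq^{\mu \times n}$, the union of the spaces $\rowsp{B}$ is exactly the set of vectors of rank weight $\leq \mu$; therefore (\ref{eq:MRD-rank-condition}) holds for \emph{all} such $B$ if and only if $\rowsp{H} = \calC^\perp$ contains no nonzero vector of rank weight $\leq \mu$, i.e.\ $\dr(\calC^\perp) > \mu$.

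With this reformulation the two directions are short. For the forward implication I use the classical fact that the dual of an $\Fqm$-linear MRD code is again MRD. Since $\calC$ is an $[n,n-k]$ code with $\dr(\calC) = k+1$ it is MRD, so $\calC^\perp$ is an $[n,k]$ MRD code with $\dr(\calC^\perp) = n-k+1$; as $\mu \leq n-k < n-k+1$, we get $\dr(\calC^\perp) > \mu$, which is exactly (\ref{eq:MRD-rank-condition}). For the converse, assume $\mu = n-k$ and that (\ref{eq:MRD-rank-condition}) holds for all $B \in \Fq^{\mu \times n}$; then $\dr(\calC^\perp) > n-k$, and since $\calC^\perp$ is an $[n,k]$ code the Singleton bound (\ref{eq:singleton-linear}) forces $\dr(\calC^\perp) = n-k+1$, so $\calC^\perp$ is MRD. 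Applying MRD duality once more to $\calC^{\perp\perp} = \calC$ yields that $\calC$ is MRD, i.e.\ $\dr(\calC) = k+1$.

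The main obstacle is precisely this duality of MRD codes: everything else is routine linear algebra over $\Fqm$, but the equivalence ``$\calC$ is MRD $\iff \calC^\perp$ is MRD'' is what breaks the symmetry between $\calC$ and the observation spaces, and it is \emph{not} implied merely by $\calC$ being MDS. Indeed, for $m \geq n$ one has MRD $\Rightarrow$ MDS but not the converse, since $\dr(\calC) \leq \dh(\calC)$ always. I would therefore either cite this result from the rank-metric literature (Gabidulin) or supply a short self-contained proof. I note finally that the hypothesis $\dr(\calC) = k+1$ silently forces $m \geq n$ via (\ref{eq:singleton-linear}), consistent with the standing assumptions of the paper and with the dual code being well-defined in the rank metric.
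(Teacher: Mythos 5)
Your proof is correct, but it takes a genuinely different route from the paper's. The paper argues directly on $\calC$ with self-contained linear algebra: given $B$ violating (\ref{eq:MRD-rank-condition}), it extends (a full row-rank compression of) $B$ to a full-rank $B' \in \Fq^{(n-k) \times n}$, notes that the stacked matrix $\mat{H \\ B'}$ then has rank less than $n$ over $\Fqm$, and extracts a nonzero $x$ with $Hx = 0$ and $B'x = 0$, the latter forcing $\rank \phi_m(x) \leq k$ and hence $\dr(\calC) \leq k$; the converse reverses the same correspondence (a nonzero codeword of rank at most $k$ is annihilated by some full-rank $B \in \Fq^{(n-k) \times n}$). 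You instead pass to the dual: your bridging lemma---that the union over $B \in \Fq^{\mu \times n}$ of the $\Fqm$-row spaces $\rowsp{B}$ is exactly the set of vectors of rank weight at most $\mu$---is sound, and it yields the clean invariant characterization that (\ref{eq:MRD-rank-condition}) holds for all $B$ if and only if $\dr(\calC^\perp) > \mu$; both directions of the theorem then drop out of MRD duality together with the Singleton bound (\ref{eq:singleton-linear}). What your route buys is conceptual clarity: it explains \emph{why} MRD codes are exactly the right objects (the eavesdropper's possible row spaces sweep out precisely the rank-metric ball of radius $\mu$, so universality is equivalent to a dual-distance condition), and the converse becomes a one-line Singleton argument. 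What it costs is self-containedness: the equivalence ``$\calC$ is MRD $\iff$ $\calC^\perp$ is MRD'' is a nontrivial external theorem that must be cited from \cite{Gabidulin1985} (or proved separately), whereas the paper's argument needs nothing beyond the definitions; you correctly flag this dependence and correctly observe that MDS duality would not suffice, since $\dr \leq \dh$ and an MDS dual could a priori have small rank distance. Your side remarks are also accurate: the intersection-dimension formula with the right sign (the paper's appendix actually has this formula with reversed sign, a typo your version silently corrects), the transposition-invariance of rank weight, and the fact that $\dr(\calC) = k+1$ with $k \geq 1$ already forces $m \geq n$ via (\ref{eq:singleton-linear}).
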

\begin{proof}
Suppose that, for some $\mu \leq n-k$, there exists some matrix $B \in \Fq^{\mu \times n}$ such that
\begin{equation}\nonumber
  \rank \mat{H \\ B} < \rank H + \rank B.
\end{equation}
Let $r = \rank B$, let $T \in \Fq^{r \times \mu}$ be some full-rank matrix such that $\rank TB = r$, and let $D \in \Fq^{(n-k-r) \times n}$ be some full-rank matrix such that the matrix
\begin{equation}\nonumber
  B' = \mat{TB \\ D} \in \Fq^{(n-k) \times n}
\end{equation}
is full-rank. We have that
\begin{align}
\rank \mat{H \\ B'}
&\leq \rank \mat{H \\ B} + \rank D \nonumber \\
&< \rank H + \rank B + \rank D = n. \nonumber
\end{align}
Let
\begin{equation}\nonumber
   M = \mat{H \\ B'}.
\end{equation}
Since $\rank M < n$, there must exist some nonzero $x \in \Fqm^n$ such that $Mx = 0$. But this implies that $Hx = 0$, i.e., $x \in \calC$, and $B'x = 0$, i.e., $\rank \phi_m(x) \leq k$. Thus, $\dr(\calC) \leq k$.

For the converse statement, suppose that $\dr(\calC) \leq k$. Then there exists some nonzero $x \in \calC$ such that $\rank \phi_m(x) \leq k$. This implies that there exists some full-rank $B \in \Fq^{(n-k) \times n}$ such that $Bx=0$. Note also that $Hx=0$. Thus,
\begin{equation}\nonumber
  \rank \mat{H \\ B} < n = \rank H + \rank B.
\end{equation}
\end{proof}
\medskip

In order to state our main result in full generality, we first present a generalization of Theorem~\ref{thm:secrecy-condition-base-field}.

\medskip
\begin{lemma}\label{lem:zero-info-linear}
  Let $H \in \Fqm^{k \times n}$ and $B \in \Fqm^{\mu \times n}$. Let $X \in \Fqm^n$, $S = HX$ and $W = BX$ be random variables. Let $\calS = \{Hx :\,  x \in \Fqm^n\}$ and, for all $s \in \calS$, let $\calX_s = \{x \in \Fqm^n :\, s = Hx\}$.
\begin{enumerate}
  \item \label{item1-lemma:zero-info-linear} If $X$ is uniform over    $\mathcal{X}_s$ given $S=s$, then
\begin{equation}\nonumber
  I(S;W) \leq \rank H + \rank B - \rank \mat{H \\ B}.
\end{equation}
  \item \label{item2-lemma:zero-info-linear} If $S$ is uniform over $\calS$, then
\begin{equation}\nonumber
  I(S;W) \geq \rank H + \rank B - \rank \mat{H \\ B}.
\end{equation}
\end{enumerate}
\end{lemma}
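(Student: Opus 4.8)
The plan is to bound $I(S;W)$ by expanding it in the two complementary ways, $I(S;W)=H(W)-H(W\mid S)$ for the upper bound in part~\ref{item1-lemma:zero-info-linear} and $I(S;W)=H(S)-H(S\mid W)$ for the lower bound in part~\ref{item2-lemma:zero-info-linear}. Each entropy is then evaluated using the elementary fact that a random vector supported on an $\Fqm$-subspace of dimension $r$ has entropy (in $q^m$-ary units) at most $r$, with equality when it is uniform over that subspace or a coset of it. The single linear-algebraic ingredient is the following consequence of rank--nullity: for matrices $N_1,N_2$ with $n$ columns and $M=\mat{N_1 \\ N_2}$, the space $\{N_2 x : x\in\ker N_1\}$ has dimension $\rank M-\rank N_1$, since the map $x\mapsto N_2x$ on $\ker N_1$ has kernel $\ker N_1\cap\ker N_2=\ker M$.

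For part~\ref{item1-lemma:zero-info-linear} I would take $I(S;W)=H(W)-H(W\mid S)$. Since $W=BX\in\colsp(B)$, we immediately get $H(W)\le\rank B$. For the conditional term, the hypothesis makes $X$ uniform over the coset $\calX_s$ of $\ker H$ for each $s$ in the support; a short check shows the nonempty fibers of $x\mapsto Bx$ restricted to $\calX_s$ are all cosets of $\ker M$ of equal size, so $W$ is uniform over a coset of $\{Bx:x\in\ker H\}$. By the rank--nullity observation with $(N_1,N_2)=(H,B)$, this coset has dimension $\rank M-\rank H$, so $H(W\mid S=s)=\rank M-\rank H$ for every such $s$ and hence $H(W\mid S)=\rank M-\rank H$. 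Subtracting yields $I(S;W)\le\rank H+\rank B-\rank M$.

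For part~\ref{item2-lemma:zero-info-linear} I would instead take $I(S;W)=H(S)-H(S\mid W)$. Uniformity of $S$ over $\calS=\colsp(H)$ gives $H(S)=\rank H$ exactly. The remaining term I would bound by support counting alone: conditioned on $W=w$, the vector $X$ lies in the coset $\{x:Bx=w\}$ of $\ker B$, so $S=HX$ is confined to a coset of $\{Hx:x\in\ker B\}$, which by rank--nullity with $(N_1,N_2)=(B,H)$ has dimension $\rank M-\rank B$. Therefore $H(S\mid W=w)\le\rank M-\rank B$ for every $w$, so $H(S\mid W)\le\rank M-\rank B$, and $I(S;W)\ge\rank H+\rank B-\rank M$.

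The main point to get right is that the two inequalities genuinely need their respective hypotheses and do not merge into an equality in general. Part~\ref{item1-lemma:zero-info-linear} relies on the conditional uniformity of $X$, which is what turns $H(W\mid S)$ into an exact value rather than a mere upper bound; part~\ref{item2-lemma:zero-info-linear} needs only that $S$ be uniform, since the bound on $H(S\mid W)$ comes from counting the conditional support and is insensitive to the conditional law of $X$ given $S$. I expect the only real care to lie in verifying, under each hypothesis, that the conditioned variable is uniform over (respectively supported on) a genuine coset of the appropriate subspace for \emph{every} value in the support, so that the dimension count $\rank M$ minus the relevant marginal rank applies uniformly; that count is exactly what produces the stated expression.
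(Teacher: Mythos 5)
Your proof is correct, and while your part~1 is essentially a reorganization of the paper's computation, your part~2 takes a genuinely different route. Write $M = \mat{H \\ B}$. For part~1 the paper expands $I(S;W) = I(S,X;W) - I(X;W|S) = H(W) - H(X|S) + H(X|S,W)$, evaluating $H(X|S) = n - \rank H$ exactly from the uniformity hypothesis and bounding $H(X|S,W) \leq n - \rank M$ by counting the fibers $\calX_{s,w}$; you instead use $I(S;W) = H(W) - H(W|S)$ and push the uniform conditional law of $X$ through $B$, obtaining $H(W|S=s) = \rank M - \rank H$ exactly for every $s$ in the support. These are two bookkeepings of the same count---your pushforward-uniformity observation plays precisely the role of the paper's fiber count---though yours is marginally more direct since it never touches conditional entropies of $X$. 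The real divergence is in part~2: the paper constructs full-rank matrices $T_1 \in \Fqm^{t \times \mu}$ and $T_2 \in \Fqm^{t \times k}$ with $T_1 B = T_2 H$ and $\rank T_2 H = t = \dim(\rowsp{H} \cap \rowsp{B})$, so that $T_1 W = T_2 S$ exhibits $t$ packets' worth of the message as a deterministic function of the observation, whence $I(S;W) \geq H(T_2 S) = t$ by uniformity of $S$; you avoid any construction and bound $H(S|W=w)$ by noting that the conditional support of $S$ is a coset of $H \ker B$, of dimension $\rank M - \rank B$ by rank--nullity, giving $H(S|W) \leq \rank M - \rank B$ and the same lower bound. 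Your version is more elementary; the paper's buys an explicit identification of \emph{which} linear function of $S$ leaks to the eavesdropper, which is conceptually informative. (Incidentally, your rank--nullity accounting is correct where the paper's displayed identity for $\dim(\rowsp{H} \cap \rowsp{B})$ has its sign reversed---it should read $\rank H + \rank B - \rank M$.)
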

\begin{proof}
See the Appendix.
\end{proof}
\medskip

We can now state the main result of this section.

\medskip
\begin{theorem}\label{thm:universal-security}
Consider an $(n \times m,\, n)_q$ linear coded network. Let $\calC$ be an $[n,n-k]$ linear code over $\Fqm$ with parity-check matrix $H \in \Fqm^{k \times n}$. A coset coding scheme based on $H$ is universally secure under $\mu$ observations if $k \leq n-\mu$, $m \geq n$ and $\calC$ is MRD. Conversely, in the case of a uniformly distributed message, the scheme is universally secure under $n-k$ observations only if $\calC$ is an MRD code with $m \geq n$.
\end{theorem}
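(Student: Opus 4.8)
The plan is to prove both directions by reducing perfect secrecy to the purely linear-algebraic rank condition $\rank \mat{H \\ B} = \rank H + \rank B$ of Theorem~\ref{thm:MRD-vertical-nonsingular}, using Lemma~\ref{lem:zero-info-linear} as the bridge between the information-theoretic quantity $I(S;W)$ and matrix ranks. Since $\rank\mat{H \\ B} \leq \rank H + \rank B$ holds trivially, both the upper bound of part~\ref{item1-lemma:zero-info-linear} and the lower bound of part~\ref{item2-lemma:zero-info-linear} of Lemma~\ref{lem:zero-info-linear} pin $I(S;W)$ to the defect $\rank H + \rank B - \rank\mat{H \\ B}$ under their respective hypotheses. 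The whole theorem then becomes the statement that this defect vanishes for every $B \in \Fq^{\mu\times n}$ precisely when $\calC$ is MRD with $m \geq n$.

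For the direct part, I would first note that the coset encoder, which picks $X$ uniformly at random subject to $HX = S$, makes $X$ uniform over $\calX_s$ given $S=s$, so part~\ref{item1-lemma:zero-info-linear} of Lemma~\ref{lem:zero-info-linear} applies and gives $I(S;W) \leq \rank H + \rank B - \rank\mat{H \\ B}$ for $W = BX$. Since $\calC$ is MRD and $m \geq n$, the linear Singleton bound~(\ref{eq:singleton-linear}) forces $\dr(\calC) = k+1$, and the hypothesis $k \leq n-\mu$ is exactly $\mu \leq n-k$. Thus Theorem~\ref{thm:MRD-vertical-nonsingular} yields $\rank\mat{H \\ B} = \rank H + \rank B$ for every $B \in \Fq^{\mu\times n}$, so the upper bound collapses to $I(S;W) \leq 0$, i.e.\ $I(S;W) = 0$, which is universal security under $\mu$ observations.

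For the converse, I would assume a uniform message and universal security under $\mu = n-k$ observations. Uniformity lets me invoke part~\ref{item2-lemma:zero-info-linear} of Lemma~\ref{lem:zero-info-linear}, giving $I(S;W) \geq \rank H + \rank B - \rank\mat{H \\ B}$; security ($I(S;W)=0$) together with the trivial reverse inequality forces the rank condition~(\ref{eq:MRD-rank-condition}) to hold for \emph{every} $B \in \Fq^{(n-k)\times n}$. The converse half of Theorem~\ref{thm:MRD-vertical-nonsingular}, applied at $\mu = n-k$, then yields $\dr(\calC) = k+1$. Finally I would feed $\dr(\calC) = k+1$ back into the Singleton bound~(\ref{eq:singleton-linear}): if $m < n$ the bound reads $\dr(\calC) \leq (m/n)k + 1 < k+1$, a contradiction, so necessarily $m \geq n$, and then $\dr(\calC)=k+1$ attains the bound, i.e.\ $\calC$ is MRD.

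The genuinely new content beyond assembling the earlier results is this last step of the converse: reading the single scalar equation $\dr(\calC)=k+1$ against the regime-dependent Singleton bound to extract both $m \geq n$ and MRD-ness simultaneously. The main thing to be careful about throughout is matching the two halves of Lemma~\ref{lem:zero-info-linear} to the two directions---part~\ref{item1-lemma:zero-info-linear}'s upper bound drives sufficiency under the coset (uniform-$X$) hypothesis, while part~\ref{item2-lemma:zero-info-linear}'s lower bound drives necessity under the uniform-$S$ hypothesis---and ensuring the quantifier over $B$ is interpreted as ``for all $B$,'' as both the definition of universal security and both halves of Theorem~\ref{thm:MRD-vertical-nonsingular} require.
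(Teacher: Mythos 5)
Your proposal is correct and takes essentially the same route as the paper: the paper's own proof is precisely this assembly, citing item~\ref{item1-lemma:zero-info-linear}) of Lemma~\ref{lem:zero-info-linear} plus Theorem~\ref{thm:MRD-vertical-nonsingular} and the MRD definition for achievability, and item~\ref{item2-lemma:zero-info-linear}) of Lemma~\ref{lem:zero-info-linear} plus Theorem~\ref{thm:MRD-vertical-nonsingular} and the Singleton bound~(\ref{eq:singleton-linear}) for the partial converse. You have simply made explicit the details the paper leaves tacit, notably the final reading of $\dr(\calC)=k+1$ against the regime-dependent bound~(\ref{eq:singleton-linear}) to extract both $m \geq n$ and MRD-ness.
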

\begin{proof}
  The achievability follows from item 1) of Lemma~\ref{lem:zero-info-linear}, Theorem~\ref{thm:MRD-vertical-nonsingular}, and the definition of an MRD code. The partial converse follows from item 2) of Lemma~\ref{lem:zero-info-linear}, Theorem~\ref{thm:MRD-vertical-nonsingular}, and the Singleton bound (\ref{eq:singleton-linear}).
\end{proof}
\medskip

The following example illustrates the constructive part of Theorem~\ref{thm:universal-security}.

\medskip
\begin{example}
  Let $q=2$, $m=n=3$, $\mu=2$ and $k=n-\mu=1$. Let $\Fqm = \mathbb{F}_{2^3}$ be generated by a root of $p(x) = x^3 + x + 1$, which we denote by $\alpha$. According to \cite{Gabidulin1985}, one possible $[n,\mu]$ linear MRD code over $\Fqm$ has parity-check matrix $H = \mat{1 & \alpha & \alpha^2}$.

To form $X = \mat{X_1 & X_2 & X_3}^T \in \Fqm^n$ given a source message $S \in \Fqm$, we can choose $X_2,X_3 \in \Fqm$ uniformly at random and set $X_1$ to satisfy
\begin{equation}\nonumber
  S = HX = X_1 + \alpha X_2 + \alpha^2 X_3.
\end{equation}
Note that $X$ can be transmitted over any $(n \times m,\,n)_q$ linear coded network. The specific network code used is irrelevant as long as each destination node is able to recover $X$.

Now, suppose that the eavesdropper intercepts $W = BX$, where
\begin{equation}\nonumber
  B = \mat{1 & 0 & 1 \\ 0 & 1 & 1}.
\end{equation}
Then
\begin{align}
  W &= B\mat{X_1 \\ X_2 \\ X_3} = \mat{1 & 0 & 1 \\ 0 & 1 & 1}\mat{S + \alpha X_2 + \alpha^2 X_3 \\ X_2 \\ X_3} \nonumber \\
&= \mat{1 \\ 0}S +  \mat{\alpha & 1+\alpha^2 \\ 1 & 1} \mat{X_2 \\ X_3}. \nonumber
\end{align}
This is a linear system with $3$ variables and $2$ equations over $\Fqm$. Note that, given $S$, there is exactly one solution for $(X_2,X_3)$ for each value of $W$. Thus, ${\sf Pr}(W|S) = 1/8^{2}$, $\forall S,W$, from which follows that $S$ and $W$ are independent.
\end{example}

\subsection{Encoder Structure}
\label{ssec:noiseless-encoder}

In this subsection, we develop a more concrete encoder structure for the coset coding scheme proposed above.

Let $H \in \Fqm^{k \times n}$ be the parity-check matrix of an $[n,n-k]$ linear code over $\Fqm$. Let $T \in \Fqm^{n \times n}$ be an invertible matrix such that
\begin{equation}\nonumber
  T^{-1} = \mat{H \\ H_1}
\end{equation}
for some $H_1 \in \Fqm^{(n-k) \times n}$. Consider the following encoder. Given a message $S \in \Fqm^k$, the encoder chooses $V \in \Fqm^{(n-k)}$ uniformly at random and independently from $S$, and produces $X \in \Fqm^n$ by computing
\begin{equation}\nonumber
  X = T \mat{S \\ V}.
\end{equation}

\medskip
\begin{proposition}\label{prop:encoder-condition-parity}
  The encoder described above is universally secure under $\mu \leq n-k$ observations if the code defined by $H$ is MRD with $m \geq n$. 
\end{proposition}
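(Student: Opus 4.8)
The plan is to recognize that this encoder is nothing more than a concrete realization of the coset coding scheme, so that universal security will follow immediately from the machinery already in place, namely Lemma~\ref{lem:zero-info-linear} and Theorem~\ref{thm:MRD-vertical-nonsingular}.

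First I would establish the coset structure. Applying $T^{-1} = \mat{H \\ H_1}$ to $X = T\mat{S \\ V}$ gives $\mat{HX \\ H_1 X} = \mat{S \\ V}$, so that $S = HX$ (which confirms zero-error decoding) and $V = H_1 X$. For a fixed message $S = s$, the map $V \mapsto T\mat{s \\ V}$ is injective because $T$ is invertible, and its image is exactly $\calX_s = \{x \in \Fqm^n : Hx = s\}$: indeed $x$ lies in the image iff $T^{-1}x = \mat{s \\ V}$ for some $V$, which holds iff $Hx = s$. Since $V$ is drawn uniformly and independently of $S$, this bijection transfers uniformity, so that $X$ is uniform over $\calX_s$ given $S = s$---precisely the hypothesis of item~1) of Lemma~\ref{lem:zero-info-linear}.

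Next I would fix an arbitrary $B \in \Fq^{\mu \times n}$ with $\mu \leq n-k$ and set $W = BX$. Item~1) of Lemma~\ref{lem:zero-info-linear} then yields $I(S;W) \leq \rank H + \rank B - \rank \mat{H \\ B}$. Because $\calC$ is an $[n,n-k]$ MRD code with $m \geq n$, the Singleton bound~(\ref{eq:singleton-linear}) forces $\dr(\calC) = k+1$, whereupon Theorem~\ref{thm:MRD-vertical-nonsingular} gives $\rank \mat{H \\ B} = \rank H + \rank B$ for every such $B$. Substituting shows $I(S;W) \leq 0$, and since mutual information is nonnegative we conclude $I(S;W) = 0$. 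As this holds for all $B \in \Fq^{\mu \times n}$, the encoder is universally secure under $\mu$ observations.

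I do not expect a genuine obstacle, since the entire content of the proposition is this reduction to coset coding; the only step needing care is verifying that the image of the encoding map coincides with the coset $\calX_s$ and that the uniformity of $V$ descends to uniformity of $X$ on that coset. Once that bijection is confirmed, the security bound is a direct application of the two earlier results, essentially mirroring the achievability argument of Theorem~\ref{thm:universal-security}.
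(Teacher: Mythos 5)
Your proposal is correct and takes essentially the same route as the paper: both recognize the encoder as a coset coding scheme with $X$ uniform on the coset $\calX_s$ given $S=s$, and then conclude via item~1) of Lemma~\ref{lem:zero-info-linear} together with Theorem~\ref{thm:MRD-vertical-nonsingular}, which is exactly the achievability content the paper packages as Theorem~\ref{thm:universal-security}. The only difference is cosmetic: you establish the uniformity through the explicit bijection $V \mapsto T\mat{s \\ V}$ onto $\calX_s$, whereas the paper derives $H(X|S) = n-k$ by expanding $H(V,X|S)$ in two ways and then invokes maximality of entropy on the coset.
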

\begin{proof}
  Note that $S = HX$ and $V = H_1 X$. Then Theorem~\ref{thm:universal-security} holds if we can prove that $X$ is uniform given $S$, i.e., if $H(X|S) = n-k$. By expanding $H(V,X|S)$ in two ways, we have
  \begin{align}
  H(X|S)
  &= H(V|S) + H(X|V,S) - H(V|X,S) \nonumber \\
  &= H(V|S) + H(X|V,S) \nonumber \\
  &= H(V|S) \nonumber \\
  &= H(V) \nonumber \\
  &= n-k. \nonumber
  \end{align}
\end{proof}
\medskip

Note that this equivalence between the two encoders has been previously shown in \cite{Rouayheb.Soljanin2007} for the case of $m=1$ with non-universal security (i.e., when $H$ satisfies the conditions of Theorem~\ref{thm:secrecy-condition-base-field} for a specific network).

We now give a security condition based directly on the matrix $T$ rather than its inverse.

\medskip
\begin{proposition}\label{prop:encoder-condition-direct}
  The encoder described above is universally secure under $\mu \leq n-k$ observations if the last $n-k$ rows of $T^T$ form a generator matrix of an $[n,n-k]$ linear MRD code over $\Fqm$ with $m \geq n$. 
\end{proposition}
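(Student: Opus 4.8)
The plan is to reduce this proposition directly to Proposition~\ref{prop:encoder-condition-parity} by showing that the code generated by the last $n-k$ rows of $T^T$ is precisely the code $\calC$ whose parity-check matrix is $H$. Once this identification is made, the hypothesis here (that the generated code is MRD with $m \ge n$) becomes identical to the hypothesis of Proposition~\ref{prop:encoder-condition-parity} (that the code defined by $H$ is MRD with $m \ge n$), and universal security under $\mu \le n-k$ observations follows at once from that proposition. No new information-theoretic argument is needed; the content is purely a change of description of the same code.

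First I would partition $T$ by its columns as $T = \mat{G_1 & G_2}$, with $G_1 \in \Fqm^{n \times k}$ and $G_2 \in \Fqm^{n \times (n-k)}$. Using $T^{-1} = \mat{H \\ H_1}$ and expanding the identity $T^{-1}T = I_n$ blockwise yields, among other relations, $H G_2 = 0$. Since $T$ is invertible, $G_2$ has full column rank $n-k$; and since $T^{-1}$ is invertible, $H$ has rank $k$, so its kernel $\calC = \{x \in \Fqm^n : Hx = 0\}$ has dimension $n-k$. Because $\colsp G_2 \subseteq \calC$ and both spaces have dimension $n-k$, I conclude $\colsp G_2 = \calC$.

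The final step is a translation between the two matrix conventions used in the paper. The last $n-k$ rows of $T^T$ are exactly $G_2^T$, and, recalling the column-vector convention of Section~\ref{ssec:rank-metric-codes}, $\calC = \colsp G_2 = \{G_2 u : u \in \Fqm^{n-k}\} = \{(G_2^T)^T u : u \in \Fqm^{n-k}\}$, so $G_2^T$ is a generator matrix of $\calC$. Hence the code generated by the last $n-k$ rows of $T^T$ coincides with the code $\calC$ that has parity-check matrix $H$. Assuming this code is an $[n,n-k]$ MRD code with $m \ge n$ is then exactly the hypothesis of Proposition~\ref{prop:encoder-condition-parity}, which establishes the claim.

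The main obstacle is not mathematical depth but bookkeeping: one must track the row/column transpositions so that ``last $n-k$ rows of $T^T$'' is correctly matched with ``last $n-k$ columns of $T$,'' and verify that the resulting matrix is a generator matrix in the paper's convention of representing codewords as column vectors. Apart from getting these conventions right, the result is an immediate corollary of the previously established proposition.
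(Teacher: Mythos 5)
Your proof is correct and is essentially identical to the paper's: the paper also expands $T^{-1}T = I_n$ blockwise to obtain $HG^T = 0$ (your $HG_2 = 0$ with $G_2 = G^T$), concludes that the last $n-k$ rows of $T^T$ generate exactly the code whose parity-check matrix is $H$, and thereby reduces the claim to Proposition~\ref{prop:encoder-condition-parity}. Your explicit dimension count ($\colsp G_2 \subseteq \ker H$ with both of dimension $n-k$) merely spells out what the paper compresses into the remark that $G$ and $H$ are full-rank.
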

\begin{proof}
  Let $G \in \Fqm^{(n-k) \times n}$ and $G_1 \in \Fqm^{k \times n}$ be such that
  \begin{equation}\nonumber
    T^T = \mat{G_1 \\ G}.
  \end{equation}
  Then
  \begin{equation}\nonumber
    \mat{I & 0 \\ 0 & I} = T^{-1} T = \mat{H \\ H_1} \mat{G_1^T & G^T} = \mat{H G_1^T & H G^T \\ H_1 G_1^T & H_1 G^T}.
  \end{equation}
  Thus, $H G^T = 0$. Since both $G$ and $H$ are full-rank, it follows that $G$ and $H$ are generator and parity-check matrices, respectively, for exactly the same code.
\end{proof}

\subsection{Converse Results}
\label{ssec:noiseless-converse}

We now prove that our scheme is optimal with respect to packet length, i.e., the scheme minimizes the required packet length among all universal schemes. For generality, in the following theorem we revert to the notation of Section~\ref{sec:problem-formulation} (with matrices over the base field $\Fq$).

%

\medskip
\begin{theorem}\label{thm:converse-noiseless}
Consider a noiseless $(n \times m,\, n)_q$ linear coded network. Assume that the source message has entropy of $k$ packets. There exists a zero-error scheme that is universally secure under $\mu = n-k$ observations only if $m \geq n$.
\end{theorem}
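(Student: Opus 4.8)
The plan is to assume a universally secure zero-error scheme exists with message entropy of $k$ packets under $\mu = n-k$ observations, and to derive $m \geq n$ by extracting a rank-metric code from the structure of the encoder and invoking the Singleton bound (\ref{eq:singleton-bound}). First I would record the two structural facts forced by the hypotheses. Since the network is noiseless and feasible ($\rho = 0$, $t = 0$), each receiver recovers $X$ exactly, so zero-error decoding means $S$ is a deterministic function of $X$, say $S = f(X)$ with $H(S \mid X) = 0$. For the security side, fix any $B \in \Fq^{\mu \times n}$ of full rank $\mu$ and let $N = \{v \in \Fq^n : Bv = 0\}$ be its ($k$-dimensional) right kernel; setting $M_N = \{D \in \Fq^{n \times m} : \colsp D \subseteq N\}$, the fibers of $x \mapsto Bx$ are exactly the cosets of $M_N$, so observing $W = BX$ is the same as observing which coset of $M_N$ contains $X$. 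As $B$ ranges over all full-rank matrices in $\Fq^{\mu \times n}$, the kernel $N$ ranges over all $k$-dimensional subspaces of $\Fq^n$; thus universal security gives the clean statement: for every $k$-dimensional $N$, the coset $X + M_N$ is independent of $S$.

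The crux of the argument---and the step I expect to be most delicate---is a counting argument converting these independence statements into a minimum-rank-distance guarantee on the fibers $\calX_s = f^{-1}(s)$. Independence of $X + M_N$ from $S$ forces $P(X \in C \mid S = s)$ to be the same for all $s$ and every coset $C$ of $M_N$; in particular every coset that meets the support meets every fiber. Since $|C| = |M_N| = q^{km}$ while $H(S) = k$ packets forces $|\calS| \geq q^{km}$, each such coset has room for at most one point per message, so I would conclude that $|\calS| = q^{km}$ (hence $S$ is uniform), that every coset meeting the support lies entirely in it, and that each coset of $M_N$ is a transversal of the fibers. Ranging over all $k$-dimensional $N$ and using that the matrices of rank $\le k$ generate all of $\Fq^{n \times m}$ (for $k \geq 1$), the support must be all of $\Fq^{n \times m}$, so each fiber has size $q^{(n-k)m}$. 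The transversal property then yields the key claim: any two distinct $x_1, x_2$ in the same fiber satisfy $\dr(x_1, x_2) \geq k+1$, since $\rank(x_1 - x_2) \leq k$ would place them in a common coset of some $M_N$, contradicting one-point-per-message.

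With this in hand the proof finishes quickly: each fiber $\calX_s$ is a matrix code in $\Fq^{n \times m}$ with $|\calX_s| = q^{(n-k)m}$ and minimum rank distance $d \geq k+1$, so the Singleton bound (\ref{eq:singleton-bound}) gives $(n-k)m \leq \max\{n,m\}(\min\{n,m\} - k)$. Assuming $m < n$ makes the right-hand side $n(m-k)$, which rearranges to $nk \leq km$, i.e.\ $m \geq n$---a contradiction (the degenerate case $m \leq k$ is even more immediate, as no code with $d > \min\{n,m\}$ can contain more than one matrix). Hence $m \geq n$. I would treat $k \geq 1$ (equivalently $\mu \leq n-1$) throughout, the excluded cases $k \in \{0, n\}$ being degenerate. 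The main obstacle is the middle paragraph: turning the per-subspace independence into the clean combinatorial transversal structure, and I would take care that only the support-level consequences of independence are needed for the minimum-distance claim, with the full distributional hypothesis entering only to pin down $|\calS| = q^{km}$ and the fiber size.
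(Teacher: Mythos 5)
Your proof is correct, and while it terminates exactly where the paper's does---some fiber $\calX_s$ is a rank-metric code with $\dr(\calX_s) \geq n-\mu+1$ and at least $q^{m\mu}$ elements, so the Singleton bound (\ref{eq:singleton-bound}) forces $m \geq n$---your route to those two facts is genuinely different. The paper gets there by entropy bookkeeping: it runs the chain $k = H(S) = I(S;X|W) \leq H(X|W) \leq n - \rank B$, notes that equality must hold for every full-rank $B \in \Fq^{\mu \times n}$, and extracts $H(X|S,W)=0$ (which gives the distance bound, via the step it attributes to Theorem~\ref{thm:MRD-vertical-nonsingular}) and $H(X|W)=n-\mu$ (which gives uniformity of $X$, hence $H(X|S=s^*) \geq \mu$ and $|\calX_{s^*}| \geq q^{m\mu}$ for some $s^*$). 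You replace all of this with the coset geometry of the observation map: $W=BX$ is precisely the coset $X+M_N$, independence at the level of supports forces every coset meeting the support to be an exact transversal of the (disjoint, by zero-error) fibers, and closure of the support under translation by rank-$\leq k$ matrices pins it down as all of $\Fq^{n\times m}$. This buys strictly stronger structural conclusions than the paper records ($|\calS| = q^{km}$ exactly, $S$ uniform, \emph{every} fiber of size exactly $q^{(n-k)m}$ rather than one large fiber), and your common-coset argument for $\dr(\calX_s) \geq k+1$ is in fact a direct, self-contained proof of a step the paper justifies somewhat loosely by citing Theorem~\ref{thm:MRD-vertical-nonsingular}, a statement about parity-check matrices of \emph{linear} codes applied there to possibly nonlinear fibers; your transversal property is the exact combinatorial content of the paper's conditions $H(X|S,W)=0$ and $H(X|W)=n-\mu$. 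What the entropy route buys in exchange is reusability: essentially the same chain, run with $Y_A$ and $W_B$, drives the noisy converse (Theorem~\ref{thm:converse-noisy}), where the clean transversal picture is no longer available. Your boundary handling is also sound: the restriction $1 \leq k \leq n-1$ you make explicit is implicit in the paper as well (the statement is degenerate for $k \in \{0,n\}$), your Singleton algebra $(n-k)m \leq n(m-k) \Leftrightarrow m \geq n$ is right, and the disposal of $m \leq k$ via $d > \min\{n,m\}$ is correct.
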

\begin{proof}
By assumption of zero-error communication (and of a noiseless network), there is a function $f\colon \Fq^{n \times m} \to \calS$ such that $S = f(X)$. Thus, we may write $\calX_s = \{x \in \Fq^{n \times m}: f(x) = s\}$.
Now,
\begin{align}
k
&= H(S) \nonumber \\
&= H(S|X,W) + I(S;X,W) \nonumber \\
&= I(S;X,W) \label{eq:converse-noiseless-proof-1} \\
&= I(S;W) + I(S;X|W) \nonumber \\
&= I(S;X|W) \label{eq:converse-noiseless-proof-2} \\
&= H(X|W) - H(X|S,W) \nonumber \\
&\leq H(X|W) \label{eq:converse-noiseless-proof-3} \\
&\leq n-\rank B. \label{eq:converse-noiseless-proof-4}
\end{align}
where (\ref{eq:converse-noiseless-proof-1}) follows since $S$ is a function of $X$ and (\ref{eq:converse-noiseless-proof-2}) follows since $I(S;W) = 0$. Since (\ref{eq:converse-noiseless-proof-4}) holds with equality for all full-rank $B \in \Fq^{\mu \times n}$, we must have $H(X|S,W) = 0$ and $H(X|W) = n-\mu$ for all such $B$. By the chain rule of entropy, it is not hard to see that the latter condition implies that $X$ is uniform (for instance, by choosing each $B$ as a submatrix of an identity matrix, as in the wiretap channel II). Thus,  $H(X) = n$. Since $H(X) = H(X,S) = H(S) + H(X|S)$, we have that $H(X|S) \geq n - k = \mu$. Thus, there must be some $s^* \in \calS$ such that $H(X|S=s^*) \geq \mu$, which implies that $|\calX_{s^*}| \geq q^{m\mu}$.
On the other hand, the fact that $H(X|S,W) = 0$ for all full-rank $B$ implies that $X$ must be uniquely determined given $W = BX$ and the indication that $X \in \calX_S$. From Theorem~\ref{thm:MRD-vertical-nonsingular}, this implies that each $\mathcal{X}_s$ must be a rank-metric code with $\dr(\mathcal{X}_s) \geq n-\mu + 1$. In particular, $\dr(\calX_{s^*}) \geq n-\mu+1$. From the Singleton bound (\ref{eq:singleton-bound}), we see that this can only happen if $m\geq n$.
\end{proof}
\medskip

As Theorem~\ref{thm:converse-noiseless} shows, if $m < n$, universal schemes do not exist. For $m \geq n$, not only do universal schemes exist, but also they achieve exactly the same rates as the best non-universal schemes. It should be noted, however, that these results assume the requirements of perfect secrecy and zero-error communication. If these conditions are relaxed to asymptotically perfect secrecy and vanishing error probability (over multiple channel uses), then it is possible to construct universal schemes even for $m=1$ \cite{Cheraghchi++2009}.

\section{Perfect Secrecy for Noisy Networks}
\label{sec:noisy}

In this section, we treat the general case of an $(n \times m, n-\rho)_q$ linear coded network subject to $t$ errors and $\mu$ observations.

\subsection{A Universal Scheme}
\label{ssec:noisy-universal}

Consider the encoder described on Section~\ref{ssec:noiseless-encoder}. Assume that the input variable for the encoder is $S' \in \Fqm^{(n-\mu)}$ (rather than $S$). Given $S'$, the encoder produces
\begin{equation}\nonumber
  X = T \mat{S' \\ V}
\end{equation}
where $V \in \Fqm^{\mu}$ is chosen uniformly at random and independently from $S'$, and $T \in \Fqm^{n \times n}$ is an invertible matrix.
Suppose that $m \geq n$, and let $\Gin \in \Fqm^{\mu \times n}$ denote the last $\mu$ rows of $T^T$. It follows from Proposition~\ref{prop:encoder-condition-direct} that, if $\Gin$ is a generator matrix for an $[n,\mu]$ linear MRD code over $\Fqm$, then the scheme is universally secure under $\mu$ observations, \emph{regardless} of the distribution of $S'$.

Suppose we choose
\begin{equation}\nonumber
  S' = \mat{0 \\ S}
\end{equation}
where $S \in \Fqm^k$ is the ``true'' message, and $k \leq n-\mu$. Then the scheme remains universally secure under $\mu$ observations. On the other hand, the redundancy in $S'$ may be useful to provide error correction.

Let us define an auxiliary variable
\begin{equation}\nonumber
  U = \mat{S \\ V}.
\end{equation}
Then the encoder effectively maps $U$ into $X$ via the deterministic mapping
\begin{equation}\nonumber
  X = G^T U
\end{equation}
where $G \in \Fqm^{(k+u) \times n}$ denotes the last $k+\mu$ rows of $T^T$. In particular, the set of all possible $X$ is given by
\begin{equation}\nonumber
  \calC = \{G^T u,\, u \in \Fqm^{(k+\mu)}\}.
\end{equation}
Then, it follows from Theorem~\ref{thm:error-correction-deterministic-encoder} that, when $X$ is transmitted over an $(n\times m,\,n-\rho)_q$ network subject to $t$ errors, the receiver can uniquely determine $U$ (and therefore $S$) if $\dr(\calC) > 2t+\rho$. This condition is satisfied if $\calC$ is an $[n,k+\mu]$ linear MRD code over $\Fqm$ and $k+\mu \leq n - (2t + \rho)$.

The above analysis proves the following result.

\medskip
\begin{theorem}\label{thm:universal-secrecy-error-control}
Consider an $(n \times m)_q$ linear coded network. In the encoder described above, assume that $G \in \Fqm^{(k+\mu) \times n}$ is the generator matrix of an $[n,k+\mu]$ linear MRD code over $\Fqm$ such that the last $\mu$ rows of $G$ form a generator matrix of an $[n,\mu]$ linear MRD code over $\Fqm$. The scheme is universally $t$-error-$\rho$-erasure-correcting and universally secure under $\mu$ observations if $m \geq n$ and $k \leq n - 2t - \rho - \mu$.
\end{theorem}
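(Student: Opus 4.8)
The plan is to prove the two asserted properties separately, since the scheme is constructed precisely so that security and error correction decouple. Security will follow from the coset structure established in Proposition~\ref{prop:encoder-condition-direct}, while error correction will follow from the deterministic rank-metric characterization of Theorem~\ref{thm:error-correction-deterministic-encoder}. The hypotheses placed on $G$ (the full matrix MRD, its last $\mu$ rows MRD) have been tailored so that each of these earlier results applies almost verbatim, so the work is mostly in lining up the parameters and checking the inequalities.

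For the security claim, I would first note that the last $\mu$ rows of $G$ coincide with $\Gin$, the last $\mu$ rows of $T^T$. By hypothesis these generate an $[n,\mu]$ linear MRD code over $\Fqm$ with $m \geq n$. Viewing the encoder as the map with message block $S' \in \Fqm^{n-\mu}$ and randomness $V \in \Fqm^{\mu}$, I would invoke Proposition~\ref{prop:encoder-condition-direct} with its parameter $k$ replaced by $n-\mu$; the required condition $\mu \leq n-(n-\mu)$ then holds with equality. The conclusion is that the scheme is universally secure under $\mu$ observations. The one point to record carefully is that this guarantee holds \emph{regardless of the distribution of $S'$}: the entropy computation in the proof of Proposition~\ref{prop:encoder-condition-parity} gives $H(X\mid S') = \mu$ simply because $V$ is uniform and independent of $S'$, so security remains valid for the particular choice $S' = \mat{0 \\ S}$ with $S$ an arbitrary, possibly non-uniform, message.

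For the error-correction claim, I would fix the auxiliary variable $U = \mat{S \\ V} \in \Fqm^{k+\mu}$ and observe that, conditioned on $U$, the encoder acts deterministically as $X = G^T U$, with image the linear code $\calC = \{G^T u : u \in \Fqm^{k+\mu}\}$. Since $G$ generates an $[n,k+\mu]$ MRD code and $m \geq n$, the Singleton bound \eqref{eq:singleton-linear} is met with equality, giving $\dr(\calC) = n-(k+\mu)+1$. The hypothesis $k \leq n-2t-\rho-\mu$ rearranges to $n-k-\mu \geq 2t+\rho$, whence $\dr(\calC) \geq 2t+\rho+1 > 2t+\rho$. Theorem~\ref{thm:error-correction-deterministic-encoder} then certifies that the deterministic encoder $U \mapsto G^T U$ is universally $t$-error-$\rho$-erasure-correcting, so the receiver recovers $U$, and in particular $S$, everywhere on the fan-out set.

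The only genuine subtlety---and the step I would treat most carefully---is reconciling the stochastic scheme (whose message is $S$, with $V$ acting as encoding randomness) with the deterministic characterization of Theorem~\ref{thm:error-correction-deterministic-encoder} (whose ``message'' is the full vector $U$). I would argue that recovering $U$ suffices to recover $S$ because $S$ is the top block of $U$; equivalently, since $G^T$ is injective ($G$ has full row rank $k+\mu \leq n$, as $k+\mu \leq n-2t-\rho \leq n$), distinct messages $S_1 \neq S_2$ yield distinct codewords $G^T\mat{S_1 \\ V_1} \neq G^T\mat{S_2 \\ V_2}$ for all randomness, and the pairwise disjointness of the deterministic fan-out sets then forces the overall fan-out sets $\calY_{(S_1)}$ and $\calY_{(S_2)}$ to be disjoint. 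This is exactly the zero-error condition for the stochastic scheme. With security and correction both established, and with the two guarantees resting on disjoint structural hypotheses (hence non-interfering), the theorem follows.
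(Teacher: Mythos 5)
Your proof is correct and takes essentially the same route as the paper: security follows from Proposition~\ref{prop:encoder-condition-direct} applied with message $S'\in\Fqm^{n-\mu}$ (valid regardless of the distribution of $S'$, hence for $S'=\mat{0\\ S}$), and error correction follows by viewing the map $U \mapsto G^T U$ as a deterministic encoder and invoking Theorem~\ref{thm:error-correction-deterministic-encoder} with $\dr(\calC)=n-k-\mu+1>2t+\rho$. Your final paragraph on disjointness of the fan-out sets $\calY_{(S_1)}$, $\calY_{(S_2)}$ simply makes explicit the step the paper summarizes as ``the receiver can uniquely determine $U$ (and therefore $S$),'' so no substantive difference remains.
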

\medskip

Whenever an error control encoder satisfies the secrecy conditions of Theorem~\ref{thm:universal-secrecy-error-control}, we will say it is \emph{secrecy-compatible}.

As mentioned in Section~\ref{sec:error-correction}, decoding can be performed using the methods in \cite{Silva++2008,Silva2009(PhD)} if $\calC$ is a Gabidulin code. In this case, if $G$ is given in the form (\ref{eq:generator-gabidulin}), then it is easy to see that any $\mu$ consecutive rows of $G$ (in particular the last ones) indeed form a generator matrix of an MRD sub-code.

\subsection{Converse Results}
\label{ssec:noisy-converse}

In this section, we prove that our proposed scheme is optimal, both in the sense of achieving the maximum possible rate and in the sense of requiring the minimum possible packet length among all schemes that achieve this maximum rate. As in Theorem~\ref{thm:converse-noiseless}, we use the generic notation of Section~\ref{sec:preliminaries}.

\medskip
\begin{theorem}\label{thm:converse-noisy}
Consider an $(n \times m)_q$ linear coded network. Assume that the source message has entropy of $k$ packets. There exists a scheme that is universally $t$-error-$\rho$-erasure-correcting and universally secure under $\mu$ observations only if $k \leq n-2t-\rho-\mu$. Moreover, this maximum rate can be attained only if $m \geq n$.
\end{theorem}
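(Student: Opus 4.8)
The plan is to prove the two claims separately, leveraging the machinery already built up. For the rate bound $k \leq n - 2t - \rho - \mu$, I would combine the erasure-converse of Theorem~\ref{thm:converse-noiseless} with the error-erasure tradeoff of Theorem~\ref{thm:error-correction-stochastic-encoder}. Specifically, suppose a scheme is universally $t$-error-$\rho$-erasure-correcting and universally secure under $\mu$ observations. First I would invoke Theorem~\ref{thm:error-correction-stochastic-encoder} to convert all error-correction capability into erasure-correction capability: since $2t + \rho \leq 2t + \rho$, the encoder is also universally $0$-error-$(2t+\rho)$-erasure-correcting. This reduces the general noisy problem to a pure-erasure problem with erasure parameter $\rho' = 2t + \rho$, while the security property is untouched (it depends only on the encoder, not on $t$ or $\rho$).

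The key reduction step is then to show that a scheme that is universally $\rho'$-erasure-correcting and universally secure under $\mu$ observations on an $(n \times m)_q$ network behaves, after the receiver applies a rank-$(n-\rho')$ channel matrix, like a \emph{noiseless} scheme on an $((n-\rho') \times m)_q$ network that must be secure under $\mu$ observations. Concretely, I would argue that the zero-error erasure-correcting requirement forces the message to be recoverable from any $(n-\rho')$-dimensional projection $AX$ of the input, so the effective reliable dimension is $n - \rho'$. Security under $\mu$ observations constrains this effective system exactly as in the noiseless case. Applying the secrecy converse embedded in Theorem~\ref{thm:converse-noiseless} (the chain $H(S) = I(S;X|W) \leq H(X|W) \leq n - \rank B$, now with the reliable dimension reduced to $n-\rho'$) yields $k \leq (n - \rho') - \mu = n - 2t - \rho - \mu$. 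The main obstacle here is making the projection argument rigorous: I must ensure that the matrix $A$ witnessing the erasure can be chosen compatibly with the observation matrix $B$ so that the information-theoretic inequalities still close with equality, and that the worst-case over all $(A,B)$ pairs is correctly accounted for.

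For the packet-length claim ($m \geq n$ is necessary to attain the maximum rate $k = n - 2t - \rho - \mu$), I would again reduce to the noiseless converse. Having converted to the pure-erasure setting with $\rho' = 2t + \rho$, attaining the maximum rate means the scheme is universally secure under $\mu$ observations \emph{and} universally $\rho'$-erasure-correcting at the extremal rate $k = (n - \rho') - \mu$. The erasure-correction requirement, via Theorem~\ref{thm:MRD-vertical-nonsingular} and the Singleton bound~(\ref{eq:singleton-bound}), forces each coset $\calX_s$ to be a rank-metric code whose minimum rank distance exceeds $\rho' + \mu$, replicating the argument in the proof of Theorem~\ref{thm:converse-noiseless}. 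As established there, a rank-metric code in $\Fq^{n \times m}$ achieving minimum rank distance $n - \mu + 1$ (after accounting for the reduced dimension) can exist only when $m \geq n$, because otherwise the Singleton bound~(\ref{eq:singleton-bound}) is violated. I expect the main subtlety to lie in carefully tracking how the effective dimension $n - \rho'$ interacts with the rank-distance requirement so that the Singleton bound still yields precisely the threshold $m \geq n$ rather than a weaker condition.
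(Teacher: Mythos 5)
Your first half follows the paper's proof essentially verbatim: the paper likewise sets $n' = n-2t-\rho$, uses Theorem~\ref{thm:error-correction-stochastic-encoder} to convert the scheme into a universally $(2t+\rho)$-erasure-correcting one, chooses a full-rank $B \in \Fq^{\mu \times n}$ together with a full-rank $A \in \Fq^{n' \times n}$ satisfying $B = PA$ (your ``compatibility'' requirement, made concrete), so that $S$ is a function of $Y_A = AX$ while $W_B = P Y_A$, and then runs the chain $k = I(S;Y_A|W_B) \leq H(Y_A|W_B) \leq n' - \rank P = n'-\mu$. So the rate bound $k \leq n-2t-\rho-\mu$ is correctly argued and matches the paper.

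The packet-length claim, however, has a genuine gap, and it is precisely the ``subtlety'' you flag at the end without resolving. Equality in the chain gives you, for each compatible pair $(A,B)$, only $H(Y_A|S,W_B) = 0$: the \emph{projection} $Y_A$ is determined by $(S,W_B)$, not $X$ itself. If you stay inside the reduced $((n-\rho') \times m)$ picture, as your sketch does, the induced rank-distance constraint lives on codes in $\Fq^{n' \times m}$, and the Singleton bound (\ref{eq:singleton-bound}) then yields only $m \geq n' = n-2t-\rho$ --- exactly the weaker threshold you worry about. The missing idea is the paper's stacking argument: for a fixed full-rank $B$, the equality $H(Y_A|S,W_B)=0$ holds simultaneously for \emph{every} $A \in \calA_B = \{A \in \Fq^{n' \times n} : \rank A = n',\ \linspan{B} \subseteq \linspan{A}\}$; stacking all such $A$ into one matrix $\bar{A}_B$, one checks that $\rank \bar{A}_B = n$ whenever $n' > \mu$ (indeed $\bar{A}_B$ contains every nonzero vector of $\Fq^{1 \times n}$ as a row), whence $H(X|S,W_B)=0$ for all full-rank $B$. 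Only then does one obtain (via Theorem~\ref{thm:error-correction-deterministic-encoder} with $t=0$ and rank deficiency $n-\mu$) that $\dr(\calX_s) \geq n-\mu+1$ in the \emph{full} space $\Fq^{n \times m}$, which together with $|\calX_{s}| \geq q^{m\mu}$ for some $s$ (obtained from the uniformity of $Y_A$, a step your sketch would also need to carry over from the noiseless proof) forces $m \geq n$ by (\ref{eq:singleton-bound}). Relatedly, your stated distance threshold ``exceeds $\rho' + \mu$'' is not the right quantity and contradicts the $n-\mu+1$ you invoke one sentence later ($n-\mu = \rho'+\mu$ only when $n = \rho'+2\mu$); the requirement $\dr(\calX_s) \geq n-\mu+1$ comes from the secrecy-plus-decodability condition $H(X|S,W_B)=0$ under rank-$\mu$ observations, not from the erasure-correcting capability of the overall code.
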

\begin{proof}
Let $n' = n-2t-\rho$. Let $B \in \Fq^{\mu \times n}$ be a full-rank matrix and let $A \in \Fq^{n' \times n}$ be a full-rank matrix such that $B = PA$ for some (necessarily full-rank) $P \in \Fq^{\mu \times n'}$. Let $Y_A = AX$ and $W_B = BX = PY_A$. If the encoder is universally $t$-error-$\rho$-erasure-correcting then, by Theorem~\ref{thm:error-correction-stochastic-encoder}, it is also universally $(2t+\rho)$-erasure-correcting. Thus, 
there is a function $f_A\colon \Fq^{n' \times m} \to \calS$ such that $S = f_A(Y_A)$. In particular, there is also a function $f \colon \Fq^{n \times m} \to \calS$ such that $S = f(X)$. Thus, we may write $\calX_s = \{x \in \Fq^{n \times m}: f(x) = s\}$.
Now,
\begin{align}
k
&= H(S) \nonumber \\
&= H(S|Y_A,W_B) + I(S;Y_A,W_B) \nonumber \\
&= I(S;Y_A,W_B) \label{eq:converse-noisy-proof-1} \\
&= I(S;W_B) + I(S;Y_A|W_B) \nonumber \\
&= I(S;Y_A|W_B) \label{eq:converse-noisy-proof-2} \\
&= H(Y_A|W_B) - H(Y_A|S,W_B) \nonumber \\
&\leq H(Y_A|W_B) \label{eq:converse-noisy-proof-3} \\
&\leq n'-\rank P = n' - \mu \label{eq:converse-noisy-proof-4}
\end{align}
where (\ref{eq:converse-noisy-proof-1}) follows since $S$ is a function of $Y_A$ and (\ref{eq:converse-noisy-proof-2}) follows since $I(S;W_B) = 0$.
This proves the first statement. Now consider the second statement. Since (\ref{eq:converse-noisy-proof-4}) holds with equality, we must have $H(Y_A|S,W_B) = 0$ and $H(Y_A|W_B) = n'-\mu$. Note that these conditions hold for all full-rank $B$ and all $A \in \calA_B$, where
\begin{equation}\nonumber
  \calA_B = \{A \in \Fq^{n' \times n} : \rank A = n',\, \linspan{B} \subseteq \linspan{A}\}.
\end{equation}
This implies that $H(\{Y_A: A \in \calA_B\} | S,W_B) = 0$ and therefore $H(\bar{Y}_B | S,W_B) = 0$, where $\bar{Y}_B = \bar{A}_B X$ and $\bar{A}_B$ is the matrix consisting of the vertical stacking of all matrices in $\calA_B$. It is not hard to see that, as long as $n' > \mu$, $\rank \bar{A}_B = n$. (In fact, $\bar{A}_B$ contains every nonzero vector of $\Fq^{1 \times n}$ as one of its rows.) It follows that $H(X|S,W_B) = 0$, for all full-rank $B$. Thus, $X$ must be uniquely determined given $W_B = BX$ and the indication that $X \in \calX_S$. From Theorem~\ref{thm:error-correction-deterministic-encoder}, this implies that each $\mathcal{X}_s$ must be a rank-metric code with $\dr(\mathcal{X}_s) \geq n-\mu + 1$.

On the other hand, we have seen that, for each full-rank $A \in \Fq^{n' \times n}$, it holds that $H(Y_A|W_B) = n'-\mu$ for \emph{all} full-rank $P \in \Fq^{\mu \times n'}$, where $W_B = P Y_A$ and $B=PA$. By the chain rule of entropy, it is not hard to see that this implies that $Y_A$ is uniform (for instance, by choosing some $P$'s that are submatrices of an identity matrix, as in the wiretap channel~II). Thus, $H(Y_A) = n'$, which implies that $H(X) \geq n'$. Since $H(X) = H(X,S) = H(S) + H(X|S)$, we have that $H(X|S) \geq n' - k = \mu$. Thus, there must be some $s \in \calS$ such that $H(X|S=s) \geq \mu$, which implies that $|\calX_s| \geq q^{m\mu}$. Together with the fact that $\dr(\mathcal{X}_s) \geq n-\mu + 1$, we can see, from the Singleton bound (\ref{eq:singleton-bound}), that this can only happen if $m\geq n$.
\end{proof}

\section{Practical Considerations}
\label{sec:practical-considerations}

\subsection{Packet Length}

The schemes proposed in this paper all require that the packet length $m$ be at least as large as the batch size (i.e., the number of transmitted packets) $n$. This is the only constraint imposed by universal schemes---in sharp contrast with previous approaches that require the network code to be known and field size $q$ to be significantly large.
In practice, the requirement on the packet length is usually easily satisfied: typical random network coding implementations use $m \gg n$, for instance, $m\geq 1024$ (with $q=256$) while $n \leq 256$ \cite{Wang.Li2007:R2,Chou++2003}.

\subsection{Layered Structure}

The fact that a single encoder/decoder pair simultaneously provides both secrecy and error control offers a great deal of simplicity and flexibility to the proposed scheme. A block diagram of the scheme is illustrated in Fig.~\ref{fig:layers}.
\begin{figure*}
  \centering
  \includegraphics[scale=0.95]{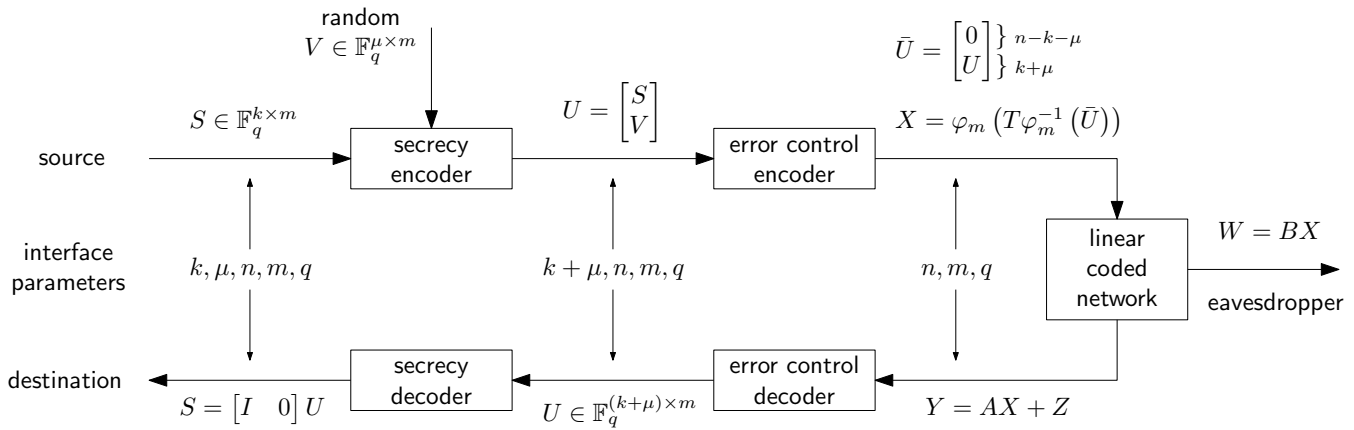}\\
  \caption{Layer structure of the proposed coding scheme, incorporating both secrecy and reliability. The source message is guaranteed to be secret from the eavesdropper provided $\rank B \leq \mu$. The destination is guaranteed to reliably recover the message provided that $\rank Z \leq (\rank A - k - \mu)/2$.}
  \label{fig:layers}
\end{figure*}
We can view the system as consisting of three layers. The first layer accepts a message of $k$ packets and performs secrecy coding simply by concatenating the message with $\mu$ random packets. The second layer accepts the secrecy-encoded message and applies secrecy-compatible error control coding. (For clarity, the isomorphism between $\Fqm^n$ and $\Fq^{n \times m}$ is shown explicitly in Fig.~\ref{fig:layers}.) The resulting codeword, consisting of $n$ packets, is then delivered to the third layer, which corresponds to the linear coded network. The interface parameters are $k$, $\mu$, $n$ and $m$, where $n-k-\mu$ determines the amount of error control (note that $d=n-k-\mu+1$ is the minimum rank distance of the code). The matrix $T \in \Fqm^{n \times n}$ in Fig.~\ref{fig:layers} is such that, for all $i=1,\ldots,n$, the last $i$ rows of $T^T$ form a generator matrix of an $[n,i]$ linear MRD code over $\Fqm$. Provided that the error control decoder associated with $T$ is flexible to handle any amount of error control given as an input parameter (this is possible for the decoders in \cite{Silva++2008,Silva2009(PhD)}), we obtain a scheme that is ``universal'' in yet another sense: the same scheme can be used regardless of the parameters $\mu$, $t$ and $\rho$ (assuming $n > \mu+2t+\rho$). As we can see from Fig.~\ref{fig:layers}, we can easily trade among rate, secrecy and error control by simply adjusting the interface parameters $k$ and $\mu$.

\subsection{Cartesian Products of Codes}

According to the structure described in Sections~\ref{ssec:noiseless-encoder} and~\ref{ssec:noisy-universal}, encoding and decoding of a source message are performed via matrix-by-vector multiplication with arithmetic over an extension field. Specifically, encoding and decoding can be performed with, respectively, $O(k'n)$ and $O(n^2)$ arithmetic operations in $\Fqm$, where $k' = k+\mu$. For moderate to large $m$, these operations may turn out to be quite expensive since, in practice, a multiplication in $\Fqm$ costs about $m^2$ operations in $\Fq$.

A convenient way to reduce this complexity is to use Cartesian products of MRD codes. Assume that $m = rn$, for some $r$. We construct a code in $\Fq^{n \times nr}$ via the isomorphism $\phi_n^{(n \times r)}$, rather than $\phi_{nr}^{(n \times 1)}$ as before. Let $\calC^r \subseteq \Fqn^{n \times r}$ denote the $r$-fold Cartesian product of a code $\calC \subseteq \Fqn^{n}$ with itself. Suppose $\calC$ is defined by the generator and parity-check matrices $G \in \Fqn^{k' \times n}$ and $H \in \Fqn^{(n-k') \times n}$, i.e., $\calC = \{G^Tu: u \in \Fqn^{k'}\} = \{x \in \Fqn^n: Hx = 0\}$. Then it follows that $\calC^r = \{G^Tu: u \in \Fqn^{k' \times r}\} = \{x \in \Fqn^{n \times r}: Hx = 0\}$. It is also clear that $\dr(\phi_n(\calC^r)) = \dr(\calC)$. Thus, all the results and methods of this paper can be equally applied to $\calC^r$. In particular, decoding is performed by applying a decoder for $\calC$ column-wise on the received matrix $X \in \Fqn^{n \times r}$. As a consequence, the encoding and decoding complexity are reduced to, respectively, $O(k'nr) = O(k'm)$ and $O(n^2r) = O(nm)$ operations in the smaller field $\Fqn$.

\subsection{Using Low-Complexity Normal Bases}

The encoding and decoding complexity can be reduced even further by using a normal basis to perform extension field arithmetic.

Let $\alpha \in \Fqn$. If the elements $\alpha, \alpha^{q^1}, \ldots, \alpha^{q^{n-1}}$ are linearly independent over $\Fq$, then $\{\alpha^{q^0},\ldots,\alpha^{q^{n-1}}\}$ is called a \emph{normal basis} for $\Fqn$ over $\Fq$, and $\alpha$ is called a \emph{normal element}. Suppose the matrix $T \in \Fqn^{n \times n}$ is given by $T = [T_{ij}]$ where $T_{i,j} = \alpha^{[i-1+j-1]}$, for $1 \leq i,j \leq n$. Then $T$ not only is invertible, but also satisfies both requirements of secrecy and error control, as any contiguous subset of rows of $T$ is a generator matrix of an MRD code \cite{Gabidulin1985}. Now, if the basis generated by $\alpha$ is also used to implement the arithmetic over $\Fqn$, then significant complexity savings can be obtained, as described in \cite{Silva.Kschischang2009:FastDecoding-ISIT,Silva2009(PhD)}. Specifically, suppose that $q$ is a power of 2 and that $\alpha$ is a self-dual, optimal normal element constructed via Gauss periods \cite{Gao1993:Thesis,Gao++2000:AlgorithmsExponentiation}. Then decoding can be performed with approximately $5(n-k')^2nm + \frac{1}{2}n^2 m$ multiplications and $10(n-k')^2nm + \frac{1}{2}n^2 m$ additions in $\Fq$, while encoding can be performed with just $2k'nm$ additions (XORs) in $\Fq$ \cite{Silva2009(PhD)}. Note that, if error control is not used (i.e., $k' = n$), then the decoding complexity is smaller than performing Gaussian elimination on the received matrix, and the encoding complexity is even much smaller.

Although normal bases exist over any finite field, normal bases satisfying the above requirements exist only for certain choices of the extension degree $n$. In particular, for $q=256$, the choices of $n$ are limited to $n=3$, 5, 9, 11, 23, 29, 33, 35, 39, 41, 51, 53, 65, 69, 81, 83, 89, 95, $99, \ldots$ \cite{Silva.Kschischang2009:FastDecoding-ISIT}. As can be seen, there is still a reasonable degree of flexibility that should be suitable for most applications. On the other hand, if low-complexity (though not necessarily optimal) normal bases are used (while retaining the properties of self-duality and Gaussianity), then an even greater degree of flexibility is possible (although with a slightly increased complexity).

\subsection{Extension to Noncoherent Network Coding}

The scheme described in the paper is suitable for coherent network coding and is indeed optimal. In the case of noncoherent (random) network coding, the scheme can be adapted by including appropriate packet headers. More precisely, the transmission matrix should be $\mat{I & X}$, where $X$ is the transmission matrix of the original scheme. Clearly, including packet headers does not affect security (since the only information carried by the headers is the coding vectors, which are already assumed to be known by the eavesdropper), but allows the scheme to be decoded when the transfer matrix $A$ is unknown. It is shown in \cite{Silva++2008} that such adaptation preserves the error-correcting capability of the code, so the universally $t$-error-$\rho$-erasure-correcting property is maintained. Although the rate achieved in this case is no longer optimal, it is very close to optimal for all practical packet lengths \cite{Silva++2008}.

%
%
%
%
%
%
%
%

\section{Conclusion}
\label{sec:conclusion}

In this paper, we have addressed the problem of achieving secure and reliable communication over a linear coded network subject to wiretapping and also possibly to jamming. We have shown that \emph{universal} schemes exist if the packet length is sufficiently large. In this case, no coordination is needed between the designs of the outer code and of the underlying network code; in particular, the field size for the network code may be chosen as the minimum required for feasibility. We have also shown that our proposed scheme is optimal in the sense of achieving the maximum possible rate and requiring the minimum possible packet length among all schemes that achieve this maximum rate. The proposed scheme is flexible in that it defines two layers above the network coding layer: a secrecy layer and a (secrecy-compatible) error control layer. The amount of information rate, secrecy protection and error control provided by the scheme can be easily traded off against each other simply by adjusting the interface parameters.

The main tool that we use in this paper is the theory of rank-metric codes. The proposed scheme borrows from our previous work on error control for network coding (without secrecy constraints) and admits very efficient encoding and decoding.

For a network that transports $n$ packets with rank deficiency~$\rho$, and is under the threat of an adversary who can eavesdrop on $\mu$ links and inject $t$ error packets, we have shown that the maximum achievable rate is at most $n - \rho - 2t - \mu$. This result assumes perfect secrecy and (one-shot) zero-error communication. If the latter requirement is relaxed to vanishingly small error probability, then it is possible to achieve a higher rate of $n - \rho - t -\mu$, provided that both the field size and the packet length grow to infinity. A natural, yet unsolved question is how to achieve this higher rate without requiring the field size to grow. Such a solution, if one exists, would reassure the ``separation principle'' advocated by this paper: that basic network coding on the on hand, and secrecy/error control protection on the other hand, can be treated as belonging to completely independent layers.

Another possible avenue for future work might be to generalize the results of this paper beyond multicast problems. An initial step in this direction has been given in \cite{Khisti++2010:ISIT-SecureBroadcast}.

\section*{Acknowledgements}

The authors would like to thank the Associate Editor and the
anonymous reviewers for their many helpful comments, which have resulted in a substantially improved paper.

\bibliographystyle{IEEEtran}
\bibliography{IEEEabrv,networkcoding,codingtheory,rankmetric,silva,finitefields}

\appendix
\label{sec:appendix}

\begin{proof}[Proof of Theorem~\ref{thm:error-correction-stochastic-encoder}]
  Let $\calY_x$ and $\calY_x'$ denote the fan-out sets of Definition~\ref{def:universally-correcting} for $(\rho,t)$ and $(\rho',t')$, respectively.
  We have to prove that if $s_1,s_2 \in \calS$ are distinct messages such that $(A',y') \in \calY_{x_1}' \cap \calY_{x_2}'$ for some $x_1 \in \calX_{s_1}$ and $x_2 \in \calX_{s_2}$, then the sets $\calY_{(s)}$, $s \in \calS$, (given by (\ref{eq:fan-out-overall})) are not all pairwise disjoint.

  Write $y' = A' x_1 + E_1' = A'x_2 + E_2'$, where $\rank A' \geq n - \rho'$, $\rank E_1' \leq t$ and $\rank E_2' \leq t'$. Let $E' = E_1' - E_2' = A'(x_2 - x_1)$, and note that $\rank E' \leq 2t'$.

  First, consider the case where $t' - t = \Delta > 0$.  Let $T \in \Fq^{n \times n}$ be a matrix whose right null space is a subspace of $\linspan{E'}$ with dimension $\min\{2\Delta,\rank E'\}$. Let $E = TE'$ and $A = TA'$. Then $\rank T \leq n - 2\Delta$,
  \begin{align}
  \rank E
  &= \rank E' - \min\{2\Delta,\rank E'\} \nonumber \\
  &\leq \max\{2t' -2\Delta,0\} \leq 2t \nonumber
  \end{align}
  and
  \begin{align}
  \rank A
  &\geq \rank T + \rank A' - n \nonumber \\
  &\geq n-2\Delta + n - \rho' - n \geq n - \rho. \nonumber
  \end{align}
  Let $E_1,E_2 \in \Fq^{n \times m}$ be such that $E = E_1 - E_2$, $\rank E_1 \leq t$, $\rank E_2 \leq t$. Then $y = Ax_1 + E_1 = A x_2 + E_2$, and therefore $(A,y) \in \calY_{x_1} \cap \calY_{x_2} \subseteq \calY_{(s_1)} \cap \calY_{(s_2)}$.

  Now, consider the case where $\rho' - \rho = 2\Delta > 0$. Let $R \in \Fq^{n \times n}$ and $A = A' + R$ be such that $\rank R = 2\Delta$ and $\rank A = \rank A' + \rank R$. Then $\rank A \geq n-\rho' + 2\Delta = n - \rho$. Let $E = E' + R(x_2 - x_1) = A(x_2 - x_1)$. Note that $\rank E \leq \rank E' + \rank R \leq 2t' + 2\Delta \leq 2t$. Once again, let $E_1,E_2 \in \Fq^{n \times m}$ be such that $E = E_1 - E_2$, $\rank E_1 \leq t$, and $\rank E_2 \leq t$. Then $y = Ax_1 + E_1 = A x_2 + E_2$, and therefore $(A,y) \in \calY_{x_1} \cap \calY_{x_2} \subseteq \calY_{(s_1)} \cap \calY_{(s_2)}$.

  The case where both $t' \leq t$ and $\rho' \leq \rho$ follows immediately from Definition~\ref{def:universally-correcting}.
\end{proof}

\medskip
\begin{proof}[Proof of Lemma~\ref{lem:zero-info-linear}]
  To prove the first statement, let $\mathcal{W} = \{Bx : x \in \Fqm^n\}$ and
\begin{equation}\nonumber
  \mathcal{X}_{s,w} = \left\{x \in \Fqm^n : \mat{s \\ w} = \mat{H \\ B}x\right\}.
\end{equation}
Observe that
\begin{align}
H(W) &\leq \log_{q^m} |\mathcal{W}| = \rank B \nonumber \\
H(X|S) &= \log_{q^m} |\mathcal{X}_S| = n- \rank H \nonumber \\
H(X|S,W) &\leq \log_{q^m} |\mathcal{X}_{S,W}| = n- \rank \mat{H \\ B}. \nonumber \end{align}
By expanding $I(S,X;W)$ and noting that $W$ is a function of $X$, we have
\begin{align}
I(S;W)
&= I(S,X;W) - I(X;W|S) \nonumber \\
&= H(W) - H(X|S) + H(X|S,W) \nonumber \\
&\leq \rank B + \rank H - \rank \mat{H \\ B}. \nonumber
\end{align}

To prove the second statement, first note that
\begin{equation}\nonumber
  \dim (\rowsp{H} \cap \rowsp{B}) = \rank \mat{H \\ B} - \rank H - \rank B
\end{equation}
where $\rowsp{\cdot}$ denotes the row space of a matrix. Let $t = \dim (\rowsp{H} \cap \rowsp{B})$. Then there exist full-rank matrices $T_1 \in \Fqm^{t \times \mu}$ and $T_2 \in \Fqm^{t \times k}$ such that $T_1 B = T_2 H$ and $\rank T_2 H = t$. This implies that
\begin{equation}\nonumber
  T_1 W = T_1 B X = T_2 H X = T_2 S.
\end{equation}
Since $S$ is uniform, we have that $I(S;W) \geq H(T_2 S) = t$.
\end{proof}

\end{document}